\newtheorem{theorem}{Theorem}
\newtheorem{corollary}[theorem]{Corollary}
\newcommand{\comments}[1]{}
\def\na{\overline{a}}
\def\nc{\overline{c}}
\def\nd{\overline{d}}
\tikzset{tt/.style={decoration={
  markings,
  mark=at position .485 with {\arrow{>}},
  mark=at position .515 with {\arrow{<}}},postaction={decorate}}}
\begin{document}

\title[]{On the Non-Monotonicity of a Non-Differentially Mismeasured Binary Confounder}

\author[]{Jose M. Pe\~{n}a\\
IDA, Link\"oping University, Sweden\\
jose.m.pena@liu.se}


\maketitle

\begin{abstract}
Suppose that we are interested in the average causal effect of a binary treatment on an outcome when this relationship is confounded by a binary confounder. Suppose that the confounder is unobserved but a non-differential binary proxy of it is observed. We identify conditions under which adjusting for the proxy comes closer to the incomputable true average causal effect than not adjusting at all. Unlike other works, we do not assume that the average causal effect of the confounder on the outcome is in the same direction among treated and untreated.
\end{abstract}

\begin{figure}[t]
\begin{tabular}{c|c}
\begin{tikzpicture}[inner sep=1mm]
\node at (0,0) (A) {$A$};
\node at (1,.5) (D) {$D$};
\node at (2,0) (Y) {$Y$};
\node at (1,1.5) (C) {$C$};
\path[->] (A) edge (Y);
\path[->] (C) edge (A);
\path[->] (C) edge (Y);
\path[->] (C) edge (D);
\end{tikzpicture}
&
\begin{tikzpicture}[inner sep=1mm]
\node at (0,0) (X) {$A$};
\node at (2.5,1.5) (Z) {$D$};
\node at (2,0) (Y) {$Y$};
\node at (1,1.5) (U) {$C$};
\path[->] (X) edge node[above] {$\alpha$} (Y);
\path[->] (U) edge node[left] {$\beta$} (X);
\path[->] (U) edge node[right] {$\gamma$} (Y);
\path[->] (U) edge node[above] {$\delta$} (Z);
\end{tikzpicture}
\end{tabular}\caption{Left: Causal graph where $Y$ is a discrete or continuous random variable, and $A$, $C$ and $D$ are binary random variables. Moreover, $C$ is unobserved. Right: Path diagram where $C$ is unobserved.}\label{fig:graphs}
\end{figure}
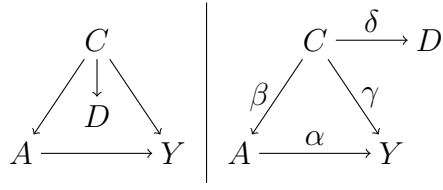

\section{Introduction}

Suppose that we are interested in the average causal effect of a binary treatment $A$ on an outcome $Y$ when this relationship is confounded by a binary confounder $C$. Suppose also that $C$ is non-differentially mismeasured, meaning that (i) $C$ is not observed and, instead, a binary proxy $D$ of $C$ is observed, and (ii) $D$ is conditionally independent of $A$ and $Y$ given $C$. The causal graph to the left in Figure \ref{fig:graphs} represents the relationships between the random variables.

\citet{Greenland1980} argues that adjusting for $D$ produces a partially adjusted measure of the average causal effect of $A$ on $Y$ that is between the crude (i.e., unadjusted) and the true (i.e., adjusted for $C$) measures and, thus, it comes closer to the incomputable true measure than the crude one. \citet{OgburnandVanderWeele2012a} show that, although this result does not always hold, it does hold under some monotonicity condition in $C$. Specifically, $E[Y|A,C]$ must be non-decreasing or non-increasing in $C$. Unfortunately, the condition cannot be verified empirically because $C$ is unobserved. \cite{OgburnandVanderWeele2013} extend these results to the case where $C$ takes more than two values. \citet{Penna2020} shows that if $E[Y|A,D]$ is non-decreasing or non-increasing in $D$ (which can be verified empirically), then so is $E[Y|A,C]$ with respect to $C$ and, thus, the partially adjusted average causal effect lies between the crude and the true ones. Finally, if there are at least two independent proxies of $C$, then \citet{Miaoetal.2018} show that the average causal effect of $A$ on $Y$ can be identified under certain rank condition.

In this paper, we focus on the case where neither $E[Y|A,C]$ nor $E[Y|A,D]$ are monotone in $C$ or $D$. And we report conditions under which the partially adjusted average causal effect is still between the crude and the true ones and, thus, it is still closer to the incomputable true average causal effect. Specifically, the rest of the paper is organized as follows. Sections \ref{sec:nonmonotone} and \ref{sec:nonmonotone2} report the novel conditions. Section \ref{sec:pathdiagrams} deals with continuous random variables. Section \ref{sec:discussion} closes with some discussion.

\section{Bounding the Observed Risk Difference}\label{sec:nonmonotone}

Consider the causal graph to the left in Figure \ref{fig:graphs}, where $Y$ is a discrete or continuous random variable, and $A$, $C$ and $D$ are binary random variables. The graph entails the following factorization:
\begin{equation}\label{eq:factorization}
p(A,C,D,Y)=p(C)p(D|C)p(A|C)p(Y|A,C).
\end{equation}
Let $A$ take values $a$ and $\na$, and similarly for $C$ and $D$. Let $A$, $D$ and $Y$ be observed and let $C$ be unobserved. Let $Y_a$ and $Y_{\na}$ denote the counterfactual outcomes under treatments $A=a$ and $A=\na$, respectively. The average causal effect of $A$ on $Y$ or true risk difference ($RD_{true}$) is defined as $RD_{true}=E[Y_a]-E[Y_{\na}]$. It can be rewritten as follows \cite[Theorem 3.3.2]{Pearl2009}:
\[
RD_{true} = E[Y|a,c]p(c) + E[Y|a,\nc]p(\nc) - E[Y|\na,c]p(c) - E[Y|\na,\nc]p(\nc).
\]
Since $C$ is unobserved, $RD_{true}$ cannot be computed. However, it can be approximated by the unadjusted average causal effect or crude risk difference ($RD_{crude}$):
\[
RD_{crude} = E[Y|a] - E[Y|\na]
\]
and by the partially adjusted average causal effect or observed risk difference ($RD_{obs}$):
\[
RD_{obs} = E[Y|a,d]p(d) + E[Y|a,\nd]p(\nd) - E[Y|\na,d]p(d) - E[Y|\na,\nd]p(\nd).
\]
Now the question is, which of the two approximations comes closer to the true quantity ? This paper aims to answer this question.

We say that $E[Y|A,C]$ is non-decreasing in $C$ if 
\[
E[Y|a, c] \geq E[Y|a, \nc] \text{ and } E[Y|\na, c] \geq E[Y|\na, \nc].
\]
Likewise, $E[Y|A,C]$ is non-increasing in $C$ if
\[
E[Y|a, c] \leq E[Y|a, \nc] \text{ and } E[Y|\na, c] \leq E[Y|\na, \nc].
\]
Moreover, $E[Y|A,C]$ is monotone in $C$ if it is non-decreasing or non-increasing in $C$, i.e. the average causal effect of $C$ on $Y$ is in the same direction among the treated ($A=a$) and the untreated ($A=\na$). \citet[Result 1]{OgburnandVanderWeele2012a} show that if $E[Y|A,C]$ is monotone in $C$, then $RD_{obs}$ lies between $RD_{true}$ and $RD_{crude}$ and, thus, it comes closer to $RD_{true}$ than $RD_{crude}$. Unfortunately, the antecedent of this rule cannot be verified empirically, because $C$ is unobserved. Therefore, one must rely on substantive knowledge to apply the rule. \citet[Corollay 2]{Penna2020} shows that if $E[Y|A,D]$ is monotone in $D$, then $RD_{obs}$ lies between $RD_{true}$ and $RD_{crude}$. Note that the antecedent of this rule can be verified empirically. Actually, $E[Y|A,C]$ is monotone in $C$ if and only if $E[Y|A,D]$ is monotone in $D$ \citep{OgburnandVanderWeele2012a,Penna2020}.

\citet[Theorems 3 and 4]{Penna2020} characterizes a case where $E[Y|A,C]$ is not monotone in $C$ and, thus, $E[Y|A,D]$ is not monotone in $D$, and yet $RD_{obs}$ lies between $RD_{true}$ and $RD_{crude}$. We re-state this result in the next theorem. Note that one must rely on substantive knowledge to verify the conditions in the theorem.

\begin{theorem}[Pe\~na, 2020, Theorems 3 and 4]\label{the:Penna2020}
Consider the causal graph to the left in Figure \ref{fig:graphs}. Let $p(c)=0.5$ and $p(a|c)=p(\na|\nc)=p(d|c)=p(\nd|\nc) \geq 0.5$. If $E[Y|a,c] - E[Y|a,\nc] \geq E[Y|\na,\nc] - E[Y|\na,c] \geq 0$, then $RD_{crude} \geq RD_{obs} \geq RD_{true}$. If $E[Y|a,c] - E[Y|a,\nc] \leq E[Y|\na,\nc] - E[Y|\na,c] \leq 0$, then $RD_{crude} \leq RD_{obs} \leq RD_{true}$.
\end{theorem}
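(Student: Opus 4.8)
The plan is to express all three risk differences as values of a single function that is \emph{affine} in a scalar weighting parameter, and then to compare that parameter across the three. Write $q$ for the common value $p(a|c)=p(\na|\nc)=p(d|c)=p(\nd|\nc)\ge 1/2$, and abbreviate $y_{ac}=E[Y|a,c]$, $y_{a\nc}=E[Y|a,\nc]$, $y_{\na c}=E[Y|\na,c]$, $y_{\na\nc}=E[Y|\na,\nc]$. Consider
\[
f(t)=t\,y_{ac}+(1-t)\,y_{a\nc}-(1-t)\,y_{\na c}-t\,y_{\na\nc},
\]
which is affine in $t$ with slope $(y_{ac}-y_{a\nc})-(y_{\na\nc}-y_{\na c})$. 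First I would verify $RD_{true}=f(1/2)$, $RD_{crude}=f(q)$, and $RD_{obs}=f(s)$ for an $s$ depending only on $q$. The equality $RD_{true}=f(1/2)$ is immediate from $p(c)=p(\nc)=1/2$. For $RD_{crude}$, the symmetric parameters force $p(a)=p(\na)=1/2$, so Bayes' rule gives $p(c|a)=q$ and $p(c|\na)=1-q$, whence $RD_{crude}=E[Y|a]-E[Y|\na]=f(q)$.

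The second step handles $RD_{obs}$. Because $D$ is a non-differential proxy, $Y\ci D\mid A,C$, hence $E[Y|A,d]=\sum_c E[Y|A,c]\,p(c|A,d)$ and likewise with $\nd$ in place of $d$, while $p(d)=p(\nd)=1/2$. Computing the relevant posteriors from the factorization \eqref{eq:factorization} with the symmetric parameters yields
\[
p(c|a,d)=p(\nc|\na,\nd)=\frac{q^2}{q^2+(1-q)^2},\qquad p(c|a,\nd)=p(c|\na,d)=\tfrac12,
\]
the identity $p(c|a,\nd)=p(c|\na,d)=\tfrac12$ being the crucial simplification: when the proxy and the treatment point in opposite directions with the same strength $q$, the evidence cancels and the conditional distribution of $C$ becomes uniform. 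Writing $r$ for the common value $q^2/(q^2+(1-q)^2)$, substituting and collecting terms gives $RD_{obs}=f(s)$ with $s=(r+1/2)/2$.

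It remains to locate $s$ and to read off the direction. Since $q\ge 1/2$ one has $r\ge 1/2$, hence $s\ge 1/2$; and setting $q=\tfrac12+\epsilon$ with $\epsilon\in[0,\tfrac12]$ turns the inequality $s\le q$, equivalently $r\le 2q-\tfrac12$, into $0\le 4\epsilon^3$, which holds. Thus $1/2\le s\le q$, and since $f$ is affine, $f(s)$ lies on the segment joining $f(1/2)$ and $f(q)$. The hypothesis then fixes which endpoint is the larger through the sign of the slope of $f$: under $(y_{ac}-y_{a\nc})\ge(y_{\na\nc}-y_{\na c})\ge0$ the slope is nonnegative, so $f(1/2)\le f(s)\le f(q)$, i.e.\ $RD_{true}\le RD_{obs}\le RD_{crude}$; under $(y_{ac}-y_{a\nc})\le(y_{\na\nc}-y_{\na c})\le0$ the slope is nonpositive and the chain reverses. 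I expect the main obstacle to be the bookkeeping in step two — deriving the posteriors $p(C|A,D)$, in particular spotting the uniform-distribution cancellation, and reassembling $RD_{obs}$ into the form $f(s)$ — together with the elementary cubic inequality needed for $s\le q$; the rest is routine.
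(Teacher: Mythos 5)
Your proof is correct, and it takes a genuinely different (and more self-contained) route than the paper. The paper does not prove this theorem directly: it cites Pe\~na (2020) and obtains the statement as the special case $p(a|c)=p(d|c)$ of its Corollary \ref{cor:nonmonotoneb}, whose proof (via Theorem \ref{the:nonmonotone}) writes $RD_{obs}-RD_{true}=\alpha\big(E[Y|a,c]-E[Y|a,\nc]+E[Y|\na,c]-E[Y|\na,\nc]\big)$ and $RD_{crude}-RD_{obs}=\beta\big(E[Y|a,d]-E[Y|a,\nd]+E[Y|\na,d]-E[Y|\na,\nd]\big)$, establishes $\alpha,\beta\ge 0$ through log-odds/sigmoid monotonicity and the symmetry $p(\nc|\na,\nd)=p(c|a,d)$, and then needs a separate step showing the two bracketed contrasts have the same sign in order to chain the two differences. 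You instead exhibit all three quantities as values $f(1/2)$, $f(s)$, $f(q)$ of a single affine function and order the arguments $1/2\le s\le q$; your slope is exactly the paper's first bracketed contrast, and your $s-1/2$ is its $\alpha$, but the sign-matching lemma disappears because everything is referred to the $C$-scale. The price is that your argument leans on the full four-way symmetry: the cancellation $p(c|a,\nd)=p(c|\na,d)=1/2$ and the identity $RD_{crude}=f(q)$ both fail once $p(a|c)\ne p(d|c)$, which is precisely why the paper's generalization to $p(a|c)=p(\na|\nc)$ and $p(d|c)=p(\nd|\nc)$ separately requires the more roundabout decomposition. Your cubic verification of $s\le q$ (reducing to $0\le 4\epsilon^3$) is correct, and for the theorem as stated your approach is cleaner.
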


The following theorems are the main contribution of this work. They show that the conditions in the previous theorem can be relaxed. Their proofs can be found in the supplementary material.

\begin{theorem}\label{the:nonmonotone}
Consider the causal graph to the left in Figure \ref{fig:graphs}. Let $p(c)=0.5$, $p(a|c)=p(\na|\nc) \geq 0.5$ and $p(d|c)=p(\nd|\nc) \geq 0.5$. Then, $RD_{obs}$ lies between $RD_{true}$ and $RD_{crude}$.
\end{theorem}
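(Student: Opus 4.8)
The plan is to express each of $RD_{crude}-RD_{true}$ and $RD_{obs}-RD_{true}$ as the \emph{same} factor times an explicit nonnegative constant, with the constant for $RD_{obs}$ no larger than the one for $RD_{crude}$; this forces $RD_{obs}$ between $RD_{true}$ and $RD_{crude}$. Write $s=p(a|c)=p(\na|\nc)\ge 1/2$ and $t=p(d|c)=p(\nd|\nc)\ge 1/2$. Since $p(c)=1/2$, one checks at once that $p(a)=p(\na)=p(d)=p(\nd)=1/2$, that $p(a,d)=p(\na,\nd)=\tfrac12[st+(1-s)(1-t)]$, and that $p(a,\nd)=p(\na,d)=\tfrac12[s(1-t)+(1-s)t]$. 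I would reparametrize the four numbers $E[Y|A,C]$ by the $C$-averages $\mu_a=\tfrac12(E[Y|a,c]+E[Y|a,\nc])$ and half-differences $\nu_a=\tfrac12(E[Y|a,c]-E[Y|a,\nc])$, and likewise $\mu_{\na},\nu_{\na}$; then $RD_{true}=\mu_a-\mu_{\na}$.

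Next I would compute the two approximations. Using Bayes' rule together with the independences $A\ci D\mid C$ and $Y\ci D\mid A,C$ implied by the factorization \eqref{eq:factorization}, the relevant posteriors are $p(c|a)=s$, $p(c|a,d)=st/[st+(1-s)(1-t)]$, $p(c|a,\nd)=s(1-t)/[s(1-t)+(1-s)t]$ and their analogues with $a,c$ flipped. Substituting these into $RD_{crude}=E[Y|a]-E[Y|\na]$ and into the definition of $RD_{obs}$, and simplifying with the reparametrization, I expect
\[
RD_{crude}-RD_{true}=(2s-1)(\nu_a+\nu_{\na}),\qquad RD_{obs}-RD_{true}=\frac{t(1-t)}{p(d|a)\,p(\nd|a)}\,(2s-1)(\nu_a+\nu_{\na}).
\]
The key simplification is that after averaging over $d$ and $\nd$ the coefficients of $\nu_a$ and of $\nu_{\na}$ in $RD_{obs}$ become equal, so the difference collapses onto $\nu_a+\nu_{\na}$ with no residual term in $\mu_a-\mu_{\na}$ or in $\nu_a-\nu_{\na}$; the arithmetic identity $s(1-t)+(1-s)t=1-[st+(1-s)(1-t)]$ does much of that work.

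It then suffices to show $0\le t(1-t)/[p(d|a)p(\nd|a)]\le 1$. Nonnegativity is immediate. For the upper bound, note that $p(d|a)p(\nd|a)=\mathrm{Var}(D|a)$ (coding $d=1$, $\nd=0$); since $D\ci A\mid C$, the law of $D$ given $A=a$ is a mixture of the laws of $D$ given $C=c$ and of $D$ given $C=\nc$, each of variance $t(1-t)$, so by the law of total variance $\mathrm{Var}(D|a)\ge E[\mathrm{Var}(D|C)\mid a]=t(1-t)$, which is exactly the inequality. (Equivalently, clearing denominators reduces it to $s(1-s)(2t-1)^2\ge 0$.) Hence $RD_{obs}-RD_{true}$ and $RD_{crude}-RD_{true}$ have the same sign with $|RD_{obs}-RD_{true}|\le|RD_{crude}-RD_{true}|$, so $RD_{obs}$ lies between $RD_{true}$ and $RD_{crude}$; the degenerate cases $s=1/2$ or $\nu_a+\nu_{\na}=0$ simply make all three quantities equal.

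The step I expect to be the main obstacle is the bookkeeping in the middle: pushing the four posteriors through $RD_{obs}$ and confirming that everything regroups into $(2s-1)(\nu_a+\nu_{\na})$ times the stated factor, rather than leaving a genuine dependence on $\nu_a-\nu_{\na}$ or a mis-weighted $\mu$-term. Once that collapse is verified, the remaining inequality is a one-line variance (or completion-of-squares) argument, and the hypothesis $p(c)=1/2$ is precisely what makes the marginals and cross-terms line up this cleanly.
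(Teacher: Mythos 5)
Your proposal is correct, and the two identities you flagged as the main risk do check out: writing $s=p(a|c)=p(\na|\nc)$ and $t=p(d|c)=p(\nd|\nc)$, one gets $2p(c|a,d)-1=(s+t-1)/p(d|a)$ and $2p(c|a,\nd)-1=(s-t)/p(\nd|a)$, and the bracket $\frac{s+t-1}{p(d|a)}+\frac{s-t}{p(\nd|a)}$ collapses to $2t(1-t)(2s-1)/[p(d|a)p(\nd|a)]$, with the $\na$-terms contributing the same factor with opposite sign; so indeed $RD_{obs}-RD_{true}=\lambda\,(2s-1)(\nu_a+\nu_{\na})=\lambda\,(RD_{crude}-RD_{true})$ with $\lambda=t(1-t)/[p(d|a)p(\nd|a)]$, and $p(d|a)p(\nd|a)-t(1-t)=s(1-s)(2t-1)^2\ge 0$ as you say. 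Your route is genuinely different in organization from the paper's: the paper establishes two separate one-sided relations, $RD_{obs}=RD_{true}+\alpha(E[Y|a,c]-E[Y|a,\nc]+E[Y|\na,c]-E[Y|\na,\nc])$ with $\alpha\ge 0$ and $RD_{crude}=RD_{obs}+\beta(E[Y|a,d]-E[Y|a,\nd]+E[Y|\na,d]-E[Y|\na,\nd])$ with $\beta\ge 0$, and then needs a third lemma matching the signs of the two bracketed quantities; you instead produce a single closed-form shrinkage identity with an explicit factor $\lambda\in[0,1]$, proved by the law of total variance. What your version buys is quantitative information the paper's proof does not give: $\lambda=t(1-t)/[t(1-t)+s(1-s)(2t-1)^2]$ decreases in the Youden index $2t-1$, which directly explains the empirical observation in Section 2.1 that stronger $C$--$D$ dependence pulls $RD_{obs}$ toward $RD_{true}$. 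What the paper's looser sign-based decomposition buys is reusability: its intermediate equations are recycled verbatim in Corollaries 3--6 and adapted in Theorems 8--10, where the exact proportionality no longer holds. One cosmetic caveat common to both arguments: in the boundary case $s=t=1$ the event $\{a,\nd\}$ has probability zero and $E[Y|a,\nd]$ is undefined, so strictly one should either exclude it or note that $RD_{obs}=RD_{crude}=RD_{true}$ there by continuity.
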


\begin{theorem}\label{the:nonmonotone2}
Consider the causal graph to the left in Figure \ref{fig:graphs}. Let $p(c)=0.5$, $p(a|c)=p(\na|\nc) \leq 0.5$ and $p(d|c)=p(\nd|\nc) \leq 0.5$. Then, $RD_{obs}$ lies between $RD_{true}$ and $RD_{crude}$.
\end{theorem}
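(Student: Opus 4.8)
The plan is to reduce both theorems to a single scalar inequality. Write $\pi:=p(a|c)=p(\na|\nc)$ and $\rho:=p(d|c)=p(\nd|\nc)$, so $\pi,\rho\in[0,1]$, with $\pi,\rho\geq 1/2$ under the hypotheses of Theorem~\ref{the:nonmonotone} and $\pi,\rho\leq 1/2$ under those of Theorem~\ref{the:nonmonotone2}; abbreviate $q_{ij}:=E[Y|A=i,C=j]$ for $i\in\{a,\na\}$, $j\in\{c,\nc\}$. First I would record the elementary consequences of $p(c)=1/2$ together with the two symmetry assumptions: $p(a)=p(\na)=p(d)=p(\nd)=1/2$, $p(c|a)=\pi$, $p(c|\na)=1-\pi$, and — via \eqref{eq:factorization} and Bayes' rule — $p(c|a,d)=\pi\rho/P$, $p(c|\na,d)=(1-\pi)\rho/Q$, $p(c|a,\nd)=\pi(1-\rho)/Q$, $p(c|\na,\nd)=(1-\pi)(1-\rho)/P$, where $P:=\pi\rho+(1-\pi)(1-\rho)$ and $Q:=\pi(1-\rho)+(1-\pi)\rho=1-P$. (I take the conditioning events to have positive probability, which holds whenever $P,Q>0$; the sole degenerate corner in each theorem's regime — $\pi=\rho=1$, respectively $\pi=\rho=0$ — is where $A$, $C$ and $D$ are deterministically equal and $RD_{obs}$ is not even defined, so I set it aside.)

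The key observation is that $RD_{true}$, $RD_{crude}$ and $RD_{obs}$ are one and the same affine function of a scalar, evaluated at three different points. Since $Y\ci D|A,C$ by \eqref{eq:factorization} (so $E[Y|A,C,D]=E[Y|A,C]$), we have $E[Y|i]=\sum_j q_{ij}p(j|i)$ and $E[Y|i,d]=\sum_j q_{ij}p(j|i,d)$, and likewise with $\nd$; substituting the probabilities above and collecting terms I expect to obtain
\[
RD_{true}=f(1/2),\qquad RD_{crude}=f(\pi),\qquad RD_{obs}=f(\mu),
\]
where $f(\theta):=\theta q_{ac}+(1-\theta)q_{a\nc}-(1-\theta)q_{\na c}-\theta q_{\na\nc}$ and $\mu:=\tfrac12\bigl(\tfrac{\pi\rho}{P}+\tfrac{\pi(1-\rho)}{Q}\bigr)$. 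The one genuine piece of bookkeeping is the $RD_{obs}$ line: one verifies that its $q_{ac}$-coefficient is the negative of its $q_{\na\nc}$-coefficient, its $q_{a\nc}$-coefficient is the negative of its $q_{\na c}$-coefficient, and that its $q_{ac}$- and $q_{a\nc}$-coefficients sum to $1$ — the last because $\tfrac{\pi\rho+(1-\pi)(1-\rho)}{P}=\tfrac{\pi(1-\rho)+(1-\pi)\rho}{Q}=1$. As $f$ is affine in $\theta$, $f(\mu)$ lies between $f(1/2)$ and $f(\pi)$ if and only if $\mu$ lies between $1/2$ and $\pi$; so the theorems reduce to showing $\mu$ lies between $1/2$ and $\pi$.

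To establish that, I would substitute $\pi=\tfrac12+\epsilon$, $\rho=\tfrac12+\eta$ with $\epsilon,\eta\in[-\tfrac12,\tfrac12]$, which yields $P=\tfrac12+2\epsilon\eta$, $Q=\tfrac12-2\epsilon\eta$, $PQ=\tfrac14-4\epsilon^2\eta^2\geq 0$, and — after a short computation — $\mu=\pi\bigl(\tfrac12-4\epsilon\eta^2\bigr)/(2PQ)$, whence
\[
\mu-\tfrac12=\frac{\epsilon\,(\tfrac12-2\eta^2)}{2PQ},\qquad \pi-\mu=\frac{4\pi\,\epsilon\,\eta^2\,(1-2\epsilon)}{2PQ}.
\]
Now $\tfrac12-2\eta^2\geq 0$ because $\eta^2\leq\tfrac14$, and $1-2\epsilon=2(1-\pi)\geq 0$ because $\pi\leq 1$, while $\pi,\eta^2,PQ\geq 0$; hence both $\mu-\tfrac12$ and $\pi-\mu$ have the sign of $\epsilon=\pi-\tfrac12$ (or vanish). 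Under the hypothesis of Theorem~\ref{the:nonmonotone}, $\epsilon\geq 0$, so $\tfrac12\leq\mu\leq\pi$; under that of Theorem~\ref{the:nonmonotone2}, $\epsilon\leq 0$, so $\pi\leq\mu\leq\tfrac12$. Either way $\mu$ lies between $1/2$ and $\pi$, and therefore $RD_{obs}=f(\mu)$ lies between $RD_{true}=f(1/2)$ and $RD_{crude}=f(\pi)$, which is what the two theorems assert.

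I expect the main obstacle to be the structural claim in the middle step: verifying that the three risk differences really collapse onto one and the same affine map $f$, in particular the cancellation of the $q_{ac}$- against the $q_{\na\nc}$-coefficient of $RD_{obs}$ and the normalization making its $q_{ac}$- and $q_{a\nc}$-coefficients sum to $1$. These identities are not evident a priori; they are precisely what the three symmetry assumptions $p(c)=1/2$, $p(a|c)=p(\na|\nc)$, $p(d|c)=p(\nd|\nc)$ provide. Once that is secured, what remains is the bounded-denominator sign check displayed above, which is routine, together with the one-line dismissal of the degenerate corner.
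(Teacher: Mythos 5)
Your proposal is correct --- I checked the conditional probabilities $p(c|a,d)=\pi\rho/P$, etc., the identities $RD_{true}=f(1/2)$, $RD_{crude}=f(\pi)$, $RD_{obs}=f(\mu)$, and the algebra giving $\mu-\tfrac12=\epsilon(\tfrac12-2\eta^2)/(2PQ)$ and $\pi-\mu=4\pi\epsilon\eta^2(1-2\epsilon)/(2PQ)$, and all of it holds --- but it is organized differently from the paper's argument. The paper proves Theorem \ref{the:nonmonotone2} in three separate steps: it writes $RD_{obs}-RD_{true}=\alpha\,S_C$ with $S_C=E[Y|a,c]-E[Y|a,\nc]+E[Y|\na,c]-E[Y|\na,\nc]$ and $\alpha\le 0$ (your $\alpha=\mu-\tfrac12$), then writes $RD_{crude}-RD_{obs}=\beta\,S_D$ with $S_D$ the analogous contrast in $E[Y|A,D]$ and $\beta\ge 0$ (which requires a separate computation of $p(d|a)$ via the $-2xz+x+z$ estimate), and finally needs a sign lemma showing $\mathrm{sign}(S_C)=-\mathrm{sign}(S_D)$ to chain the two. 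You collapse all of this into the single observation that the three risk differences are one affine map $f$ evaluated at $\tfrac12$, $\mu$ and $\pi$, so that both increments are multiples of the \emph{same} contrast $S_C$ and the whole theorem reduces to ordering the scalar $\mu$ between $\tfrac12$ and $\pi$; since your two displayed numerators both carry the sign of $\epsilon$, the argument delivers Theorems \ref{the:nonmonotone} and \ref{the:nonmonotone2} simultaneously and dispenses with the $E[Y|A,D]$ detour and the sign lemma entirely. What the paper's route buys in exchange is that its intermediate objects ($\beta$, $S_D$, and the sign relation) are reused verbatim in the corollaries relating the $C$-contrasts to the empirically testable $D$-contrasts, which your streamlined version does not produce as a by-product. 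Two small points: your ``degenerate corner'' remark is right as far as $RD_{obs}$ is concerned, but note that positivity of $p(A,C)$ is already needed for $RD_{true}$ itself to be defined (the paper leaves this implicit too); and the ``only if'' half of your affine-function equivalence fails when $f$ is constant, though you only use the ``if'' direction, so nothing is lost.
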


The following example gives some intuition about the conditions in Theorem \ref{the:nonmonotone}. Let $A$, $D$ and $Y$ represent three diseases, and $C$ a gene variant that affects the three of them. Moreover, suppose that suffering $A$ affects the risk of suffering $Y$. Suppose also that half of the population carries the gene variant $C$, i.e. $p(c)=0.5$. Suppose also that carrying $C$ predisposes to suffer $A$ and $D$ as much as not carrying it protects against the diseases, i.e. $p(a|c)=p(\na|\nc) \geq 0.5$ and $p(d|c)=p(\nd|\nc) \geq 0.5$. Then, the theorem applies.

\begin{corollary}\label{cor:nonmonotoneb}
Consider the causal graph to the left in Figure \ref{fig:graphs}. Let $p(c)=0.5$, $p(a|c)=p(\na|\nc) \geq 0.5$ and $p(d|c)=p(\nd|\nc) \geq 0.5$. If $E[Y|a,c] - E[Y|a,\nc] \geq E[Y|\na,\nc] - E[Y|\na,c] \geq 0$, then $RD_{crude} \geq RD_{obs} \geq RD_{true}$. If $E[Y|a,c] - E[Y|a,\nc] \leq E[Y|\na,\nc] - E[Y|\na,c] \leq 0$, then $RD_{crude} \leq RD_{obs} \leq RD_{true}$.
\end{corollary}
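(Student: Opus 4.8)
The plan is to derive the corollary from Theorem~\ref{the:nonmonotone} together with a short linear computation that fixes the order of $RD_{crude}$ and $RD_{true}$. Theorem~\ref{the:nonmonotone} already guarantees, under exactly the probability assumptions of the corollary, that $RD_{obs}$ lies between $RD_{true}$ and $RD_{crude}$. So it will be enough to show that the hypothesis $E[Y|a,c] - E[Y|a,\nc] \geq E[Y|\na,\nc] - E[Y|\na,c] \geq 0$ forces $RD_{crude} \geq RD_{true}$, and symmetrically that the reversed hypothesis forces $RD_{crude} \leq RD_{true}$; the two displayed chains in the statement then follow immediately, reading ``lies between'' with non-strict inequalities.

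To compare $RD_{crude}$ with $RD_{true}$ I would first rewrite $RD_{crude}$ by conditioning on $C$. Set $q = p(a|c) = p(\na|\nc) \geq 0.5$, so that $p(a|\nc) = 1-q$. Since $p(c) = 0.5$, Bayes' rule gives $p(a) = p(\na) = 0.5$ and hence $p(c|a) = q$, $p(\nc|a) = 1-q$, $p(c|\na) = 1-q$, $p(\nc|\na) = q$. Expanding $E[Y|a] = E[Y|a,c]\,p(c|a) + E[Y|a,\nc]\,p(\nc|a)$ and the analogous expression for $E[Y|\na]$ gives
\[
RD_{crude} = q\,E[Y|a,c] + (1-q)\,E[Y|a,\nc] - (1-q)\,E[Y|\na,c] - q\,E[Y|\na,\nc],
\]
whereas $p(c) = p(\nc) = 0.5$ makes $RD_{true}$ the same combination of the four conditional means with all coefficients replaced by $\pm\frac{1}{2}$.

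Subtracting, the four coefficients become $\pm\left(q-\frac{1}{2}\right)$ and a brief rearrangement yields
\[
RD_{crude} - RD_{true} = \left(q - \frac{1}{2}\right)\Big[\big(E[Y|a,c] - E[Y|a,\nc]\big) - \big(E[Y|\na,\nc] - E[Y|\na,c]\big)\Big].
\]
Under the first hypothesis the factor $q - \frac{1}{2}$ is non-negative and the bracket is non-negative, so $RD_{crude} \geq RD_{true}$; together with Theorem~\ref{the:nonmonotone} this gives $RD_{crude} \geq RD_{obs} \geq RD_{true}$. Under the second hypothesis the bracket flips sign while $q - \frac{1}{2} \geq 0$ is unchanged, so $RD_{crude} \leq RD_{true}$ and hence $RD_{crude} \leq RD_{obs} \leq RD_{true}$.

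I do not expect a real obstacle here: once Theorem~\ref{the:nonmonotone} is available the argument is an elementary rearrangement, and the only points needing care are the Bayes step (which hinges on $p(c) = 0.5$ making $p(a) = 0.5$, so that the posteriors of $C$ collapse to $q$ and $1-q$) and the bookkeeping that isolates the single factor $q - \frac{1}{2}$. The substantive work --- confining $RD_{obs}$ to the interval spanned by $RD_{true}$ and $RD_{crude}$ --- is entirely contained in Theorem~\ref{the:nonmonotone}; the corollary merely adds the remark that the sign-and-magnitude condition on the differences of $E[Y|A,C]$ identifies which endpoint of that interval plays which role.
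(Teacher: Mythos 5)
Your proof is correct, but it takes a genuinely different (and more modular) route than the paper. The paper's own proof of this corollary does not invoke Theorem~\ref{the:nonmonotone} as a black box; instead it cites three intermediate identities from that theorem's proof, namely $RD_{obs} = RD_{true} + \alpha\,S_C$ with $\alpha \geq 0$ where $S_C = E[Y|a,c] - E[Y|a,\nc] + E[Y|\na,c] - E[Y|\na,\nc]$, then $RD_{crude} = RD_{obs} + \beta\,S_D$ with $\beta \geq 0$ where $S_D$ is the analogous sum in $D$, and finally $sign(S_C) = sign(S_D)$; the hypothesis fixes the sign of $S_C$ and the whole chain falls out link by link. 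You instead keep only the conclusion of Theorem~\ref{the:nonmonotone} (betweenness) and add the elementary identity $RD_{crude} - RD_{true} = \bigl(q - \tfrac{1}{2}\bigr)\,S_C$ with $q = p(a|c)$, obtained from $p(c|a) = q$, $p(\nc|\na) = q$ under $p(c) = 0.5$; this pins down which endpoint is which, and betweenness does the rest. Your Bayes step and the coefficient bookkeeping check out (the identity you derive is essentially Equation~\ref{eq:rdtrue2} of the paper specialized to $\alpha = \beta = q - \tfrac{1}{2}$). What your approach buys is independence from the internals of the theorem's proof: anyone who accepts the theorem's statement can verify the corollary with a two-line computation. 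What the paper's approach buys is that it also locates $RD_{obs}$ relative to each endpoint directly via the signs of $\alpha$, $\beta$ and $S_C$, without needing to argue separately about the degenerate case $RD_{crude} = RD_{true}$ (which in your argument is handled, harmlessly, by reading all inequalities non-strictly).
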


\begin{corollary}\label{cor:nonmonotone2b}
Consider the causal graph to the left in Figure \ref{fig:graphs}. Let $p(c)=0.5$, $p(a|c)=p(\na|\nc) \leq 0.5$ and $p(d|c)=p(\nd|\nc) \leq 0.5$. If $E[Y|a,c] - E[Y|a,\nc] \geq E[Y|\na,\nc] - E[Y|\na,c] \geq 0$, then $RD_{crude} \leq RD_{obs} \leq RD_{true}$. If $E[Y|a,c] - E[Y|a,\nc] \leq E[Y|\na,\nc] - E[Y|\na,c] \leq 0$, then $RD_{crude} \geq RD_{obs} \geq RD_{true}$.
\end{corollary}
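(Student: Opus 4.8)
The plan is to obtain the corollary from Theorem~\ref{the:nonmonotone2} together with a short sign computation. Under the stated hypotheses Theorem~\ref{the:nonmonotone2} already tells us that $RD_{obs}$ lies in the closed interval with endpoints $RD_{true}$ and $RD_{crude}$, but it does not tell us which endpoint is the larger one; so all that remains is to determine the sign of $RD_{true}-RD_{crude}$. Once we know, say, $RD_{crude}\le RD_{true}$, the betweenness supplied by Theorem~\ref{the:nonmonotone2} forces $RD_{crude}\le RD_{obs}\le RD_{true}$, and symmetrically in the other direction. Note that the hypothesis $p(d|c)=p(\nd|\nc)\le 0.5$ enters only through this appeal to Theorem~\ref{the:nonmonotone2}; the sign of $RD_{true}-RD_{crude}$ will turn out to depend only on $p(c)=0.5$ and $p(a|c)=p(\na|\nc)\le 0.5$.

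For the sign computation we would first rewrite $RD_{crude}$ as a quantity adjusted for $C$ with Bayes-inverted weights, $E[Y|a]=E[Y|a,c]\,p(c|a)+E[Y|a,\nc]\,p(\nc|a)$ and likewise for $\na$. Here the two assumptions $p(c)=0.5$ and $p(a|c)=p(\na|\nc)$ (equivalently $p(a|\nc)=p(\na|c)$) do the work: they give $p(a)=p(\na)=0.5$, so the posterior $p(c|a)$ simplifies to $p(a|c)$ and $p(c|\na)$ to $p(\na|c)=1-p(a|c)$. Writing $s:=p(a|c)$ and subtracting from the equally weighted expression for $RD_{true}$ (equal weights because $p(c)=0.5$), the mixed terms cancel and one is left with
\[
RD_{true}-RD_{crude}=\Bigl(\tfrac12-s\Bigr)\Bigl[\bigl(E[Y|a,c]-E[Y|a,\nc]\bigr)-\bigl(E[Y|\na,\nc]-E[Y|\na,c]\bigr)\Bigr].
\]
Since $s\le\tfrac12$, the prefactor is nonnegative, so $RD_{true}-RD_{crude}$ has the same sign as the bracket. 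Under the first hypothesis the bracket is $\ge 0$ --- this is exactly $E[Y|a,c]-E[Y|a,\nc]\ge E[Y|\na,\nc]-E[Y|\na,c]$ --- hence $RD_{crude}\le RD_{true}$ and therefore $RD_{crude}\le RD_{obs}\le RD_{true}$; under the second hypothesis the bracket is $\le 0$, hence $RD_{crude}\ge RD_{obs}\ge RD_{true}$.

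The only point that needs care is the bookkeeping in the displayed identity: the assumption $p(a|c)=p(\na|\nc)$ must be used twice, once to collapse $p(a)$ to $\tfrac12$ so that the posterior weights become the corresponding conditionals $p(a|c),p(\na|c)$, and once more to match the four conditional means with the correct coefficients. Everything else is routine algebra. As a cross-check --- and an alternative proof that avoids the computation altogether --- one may simply interchange the two labels of $C$: this preserves $p(c)=0.5$, leaves $RD_{true}$, $RD_{crude}$ and $RD_{obs}$ unchanged, turns $p(a|c)=p(\na|\nc)\le 0.5$ into $\ge 0.5$ and similarly for $d$, and swaps the two hypotheses on $E[Y|A,C]$, so that Corollary~\ref{cor:nonmonotone2b} becomes precisely Corollary~\ref{cor:nonmonotoneb} applied to the relabelled confounder.
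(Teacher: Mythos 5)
Your proof is correct, but it is organized differently from the paper's. The paper derives the corollary by reusing three intermediate identities from inside the proof of Theorem~\ref{the:nonmonotone2}: $RD_{obs}=RD_{true}+\alpha\,(E[Y|a,c]-E[Y|a,\nc]+E[Y|\na,c]-E[Y|\na,\nc])$ with $\alpha\le 0$, $RD_{crude}=RD_{obs}+\beta\,(E[Y|a,d]-E[Y|a,\nd]+E[Y|\na,d]-E[Y|\na,\nd])$ with $\beta\ge 0$, and the fact that the two parenthesized sums have opposite signs; this pins $RD_{obs}$ against each endpoint separately. You instead treat Theorem~\ref{the:nonmonotone2} as a black box (betweenness only) and settle the orientation by comparing the endpoints directly, via the identity $RD_{true}-RD_{crude}=(\tfrac12-s)\bigl[(E[Y|a,c]-E[Y|a,\nc])-(E[Y|\na,\nc]-E[Y|\na,c])\bigr]$ with $s=p(a|c)=p(c|a)\le\tfrac12$; I checked the weight computation ($p(a)=\tfrac12$, $p(c|a)=p(a|c)$, $p(c|\na)=1-p(a|c)$) and the cancellation, and both are right, as is the handling of the degenerate case $RD_{true}=RD_{crude}$. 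What your route buys is modularity --- it needs only the statement of the theorem, not its proof internals, and the endpoint comparison uses only $p(c)=0.5$ and $p(a|c)=p(\na|\nc)\le 0.5$. Your relabelling remark ($c\leftrightarrow\nc$, $d\leftrightarrow\nd$) is also valid and is in fact the slickest argument on offer: it maps the hypotheses of Corollary~\ref{cor:nonmonotone2b} exactly onto those of Corollary~\ref{cor:nonmonotoneb} with the two sign cases swapped, while leaving $RD_{true}$, $RD_{obs}$ and $RD_{crude}$ invariant; the paper does not exploit this symmetry (it reproves Theorem~\ref{the:nonmonotone2} by mirroring the argument for Theorem~\ref{the:nonmonotone} instead).
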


To get some intuition about the conditions for the first result in Corollary \ref{cor:nonmonotoneb}, let us extend the previous example with the following additional assumption: Carrying the gene variant $C$ increases the average severity of $Y$ for the individuals suffering $A$ more than it decreases the severity for the rest. Then, the corollary applies.

Note that one must rely on substantive knowledge to verify the conditions in the previous theorems and corollaries. The next two corollaries show that this can partially be alleviated by replacing the conditions on $E[Y|A,C]$ with similar conditions on $E[Y|A,D]$: The former are not empirically testable because $C$ is unobserved, but the latter are.

\begin{corollary}\label{cor:nonmonotonec}
Under the conditions in Corollary \ref{cor:nonmonotoneb}, $E[Y|a,c] - E[Y|a,\nc] \geq E[Y|\na,\nc] - E[Y|\na,c] \geq 0$ if and only if $E[Y|a,d] - E[Y|a,\nd] \geq E[Y|\na,\nd] - E[Y|\na,d] \geq 0$. Likewise when replacing $\geq$ with $\leq$.
\end{corollary}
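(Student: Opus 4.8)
The plan is to rewrite every conditional expectation given $D$ as a posterior average of the conditional expectations given $C$, and then to observe that passing from $C$ to $D$ rescales both chains of inequalities by one and the same nonnegative number, so that the two systems of inequalities are equivalent.

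First I would use the factorization \eqref{eq:factorization}, which entails $Y \ci D | A,C$, to write
\[
E[Y|A,D]=E[Y|A,c]\,p(c|A,D)+E[Y|A,\nc]\,p(\nc|A,D).
\]
Then, by Bayes' rule together with $p(c)=p(\nc)=0.5$, the four posteriors $p(c|a,d)$, $p(c|a,\nd)$, $p(c|\na,d)$ and $p(c|\na,\nd)$ admit closed forms built only from $s:=p(a|c)=p(\na|\nc)$ and $t:=p(d|c)=p(\nd|\nc)$; for instance $p(c|a,d)=ts/\bigl(ts+(1-t)(1-s)\bigr)$, and the remaining three are obtained by the substitutions $s\mapsto 1-s$ and/or $t\mapsto 1-t$.

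Since each $E[Y|A,d]$ is an affine function of the corresponding posterior with fixed endpoints $E[Y|A,c]$ and $E[Y|A,\nc]$, substituting the posteriors makes the two quantities appearing in the conclusion factor:
\begin{align*}
E[Y|a,d]-E[Y|a,\nd]&=\bigl(p(c|a,d)-p(c|a,\nd)\bigr)\bigl(E[Y|a,c]-E[Y|a,\nc]\bigr),\\
E[Y|\na,\nd]-E[Y|\na,d]&=\bigl(p(c|\na,d)-p(c|\na,\nd)\bigr)\bigl(E[Y|\na,\nc]-E[Y|\na,c]\bigr).
\end{align*}
The heart of the argument is then a direct computation showing that the two posterior gaps are equal to the common value
\[
\lambda:=\frac{s(1-s)(2t-1)}{\bigl(ts+(1-t)(1-s)\bigr)\bigl((1-t)s+t(1-s)\bigr)},
\]
which is nonnegative because $s\ge 1/2$ gives $s(1-s)\ge 0$ and $t\ge 1/2$ gives $2t-1\ge 0$, and which is strictly positive outside the degenerate cases $p(a|c)=1$ and $p(d|c)=1/2$ (in the latter, $D$ is uninformative and $RD_{obs}=RD_{crude}$).

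Finally, since $E[Y|a,d]-E[Y|a,\nd]=\lambda\bigl(E[Y|a,c]-E[Y|a,\nc]\bigr)$ and $E[Y|\na,\nd]-E[Y|\na,d]=\lambda\bigl(E[Y|\na,\nc]-E[Y|\na,c]\bigr)$ with the \emph{same} factor $\lambda>0$, multiplying the chain $E[Y|a,c]-E[Y|a,\nc]\ge E[Y|\na,\nc]-E[Y|\na,c]\ge 0$ by $\lambda$ (and dividing back by it) yields the stated equivalence, and the version with $\le$ follows verbatim by reversing the inequalities. I expect the only real obstacle to be the bookkeeping in this computation: one must check that the treated gap $p(c|a,d)-p(c|a,\nd)$ and the untreated gap $p(c|\na,d)-p(c|\na,\nd)$ simplify to the identical rational function of $s$ and $t$, the point being that after clearing denominators both collapse to $s(1-s)(2t-1)$, and this collapse is precisely where the hypotheses $p(c)=0.5$, $p(a|c)=p(\na|\nc)$ and $p(d|c)=p(\nd|\nc)$ enter.
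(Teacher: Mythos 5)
Your proof is correct and follows essentially the same route as the paper: both factor the $D$-gaps as the corresponding $C$-gaps times a common nonnegative multiplier $p(c|a,d)-p(c|a,\nd)$, the paper obtaining the equality of the treated and untreated multipliers from the symmetry identities $p(\nc|\na,\nd)=p(c|a,d)$ and $p(\nc|\na,d)=p(c|a,\nd)$ where you instead compute the common rational function $\lambda$ explicitly. The only caveat is that the \emph{only if} direction genuinely requires $\lambda>0$, a degeneracy you flag (e.g.\ $p(d|c)=1/2$) but the paper's own proof glosses over.
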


\begin{corollary}\label{cor:nonmonotone2c}
Under the conditions in Corollary \ref{cor:nonmonotone2b}, $E[Y|a,c] - E[Y|a,\nc] \geq E[Y|\na,\nc] - E[Y|\na,c] \geq 0$ if and only if $E[Y|a,d] - E[Y|a,\nd] \leq E[Y|\na,\nd] - E[Y|\na,d] \leq 0$. Likewise when swapping $\leq$ and $\geq$.
\end{corollary}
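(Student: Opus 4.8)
\emph{Proof proposal.} The plan is to show that, under the hypotheses of Corollary~\ref{cor:nonmonotone2b}, a single constant $\lambda\le 0$ satisfies
\[
E[Y|x,d]-E[Y|x,\nd]=\lambda\,\big(E[Y|x,c]-E[Y|x,\nc]\big)\qquad\text{for both }x\in\{a,\na\}.
\]
Granting this, each equivalence is immediate: multiplying the chain $E[Y|a,c]-E[Y|a,\nc]\ge E[Y|\na,\nc]-E[Y|\na,c]\ge 0$ by $\lambda\le 0$ reverses it into $E[Y|a,d]-E[Y|a,\nd]\le E[Y|\na,\nd]-E[Y|\na,d]\le 0$, and since the map $v\mapsto\lambda v$ is invertible when $D$ is a non-degenerate proxy ($\lambda\ne0$), the converse holds too; the version with $\le$ and $\ge$ swapped is identical.

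To obtain the displayed identity I would first use the factorization~\eqref{eq:factorization}, which gives $D\ci(A,Y)\mid C$, to write $E[Y|x,d']=E[Y|x,c]\,p(c|x,d')+E[Y|x,\nc]\,p(\nc|x,d')$ for $x\in\{a,\na\}$ and $d'\in\{d,\nd\}$; subtracting the two choices of $d'$ and using $p(\nc|x,d)-p(\nc|x,\nd)=-\big(p(c|x,d)-p(c|x,\nd)\big)$ gives $E[Y|x,d]-E[Y|x,\nd]=\lambda_x\big(E[Y|x,c]-E[Y|x,\nc]\big)$ with $\lambda_x:=p(c|x,d)-p(c|x,\nd)$.

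Next I would compute $\lambda_a$ and $\lambda_{\na}$ explicitly. Writing $s:=p(a|c)=p(\na|\nc)$ and $t:=p(d|c)=p(\nd|\nc)$, the assumption $p(c)=0.5$ gives $p(a)=p(d)=0.5$, hence $p(c|a)=s$ and $p(c|\na)=1-s$, and Bayes' rule with $p(d'|x,c)=p(d'|c)$ gives closed forms for $p(c|x,d')$. Clearing denominators yields
\[
\lambda_a=\frac{s(1-s)(2t-1)}{\big[ts+(1-t)(1-s)\big]\big[t(1-s)+(1-t)s\big]},
\]
and $\lambda_{\na}$ arises from this by $s\mapsto 1-s$, under which the numerator is unchanged and the denominator is invariant (its two bracketed factors merely swap). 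Hence $\lambda_a=\lambda_{\na}=:\lambda$, and since $t\le\tfrac12$ in Corollary~\ref{cor:nonmonotone2b} while the denominator is positive, $\lambda\le 0$.

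The only real obstacle is this middle computation --- establishing that $\lambda_a$ and $\lambda_{\na}$ coincide (the identity that collapses everything to multiplication by a scalar) and that their common value is non-positive; the convex-combination representation is immediate from non-differentiality and the final step is pure bookkeeping. The same calculation with $t\ge\tfrac12$ makes $\lambda\ge0$, which explains why under the hypotheses of Corollary~\ref{cor:nonmonotoneb} the inequalities are preserved rather than reversed, in agreement with Corollary~\ref{cor:nonmonotonec}.
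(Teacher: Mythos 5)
Your proposal is correct and takes essentially the same route as the paper: the paper's Equations \ref{eq:kk} and \ref{eq:kk2} are precisely your scalar identity $E[Y|x,d]-E[Y|x,\nd]=\lambda\,(E[Y|x,c]-E[Y|x,\nc])$ with $\lambda=p(c|a,d)-p(c|a,\nd)$, except that the paper obtains $\lambda_a=\lambda_{\na}$ from the symmetries $p(\nc|\na,\nd)=p(c|a,d)$ and $p(\nc|\na,d)=p(c|a,\nd)$ (Equations \ref{eq:equality1} and \ref{eq:equality2}) and the sign $\lambda\le 0$ from the proof of Theorem \ref{the:nonmonotone2}, whereas you verify both by the explicit closed form in $s$ and $t$. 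Your caveat that the ``if'' direction requires $\lambda\neq 0$ is a point the paper passes over silently: when $p(d|c)=p(\nd|\nc)=0.5$ (or $p(a|c)\in\{0,1\}$) the right-hand chain degenerates to $0\le 0\le 0$ and no longer implies the left-hand one, so flagging non-degeneracy is warranted.
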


\begin{figure}[t]
\begin{tabular}{cc}
\includegraphics[scale=.38]{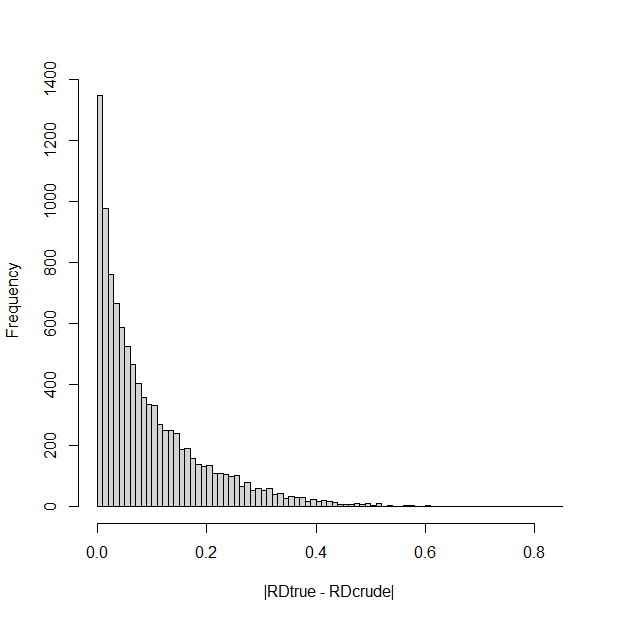}&\includegraphics[scale=.38]{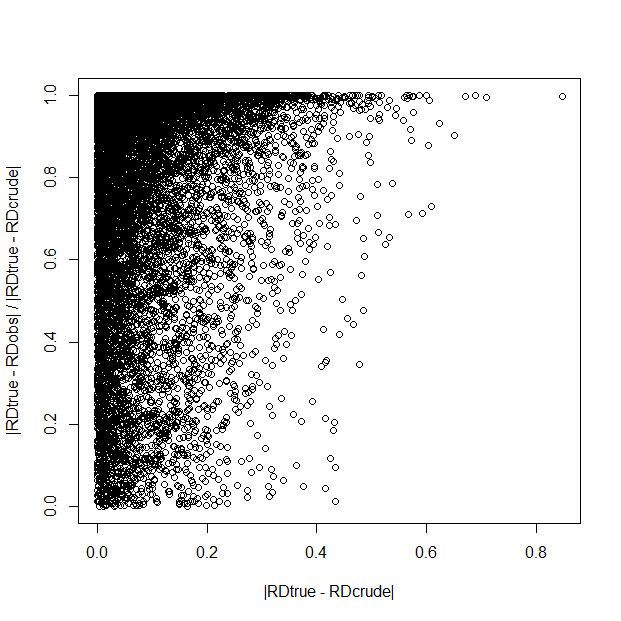}\\
\includegraphics[scale=.38]{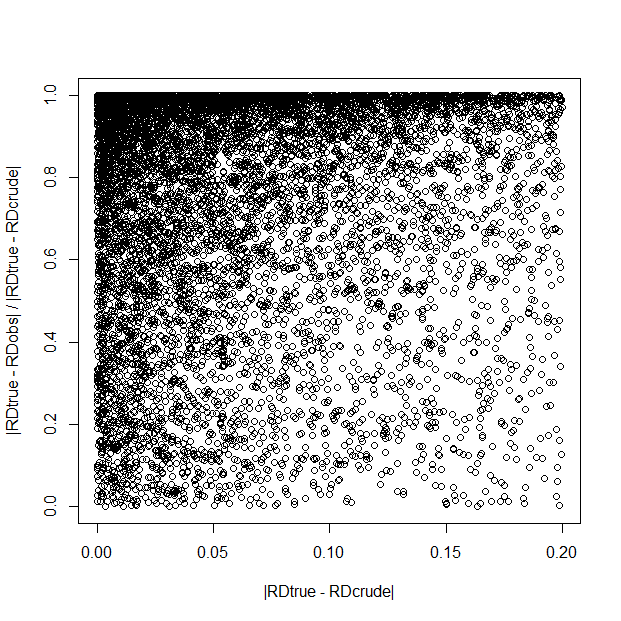}&\includegraphics[scale=.38]{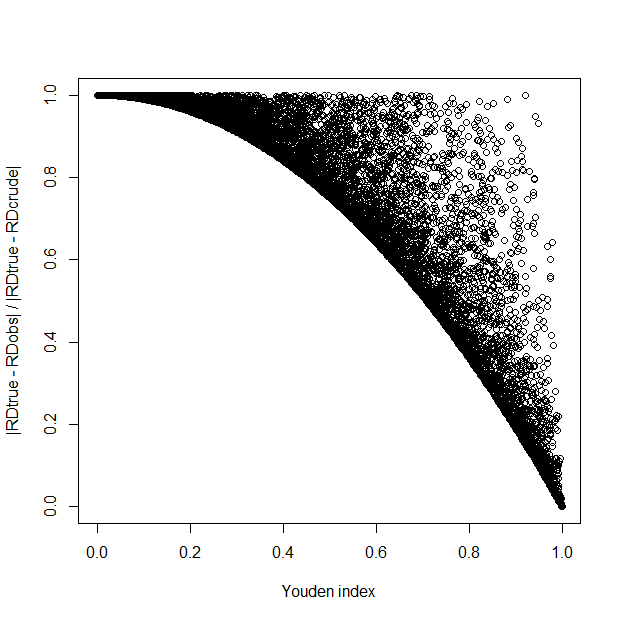}
\end{tabular}\caption{(tl) Histogram of the interval length. (tr) Distance between $RD_{obs}$ and $RD_{true}$ relative to the interval length. (bl) Zoom of the previous plot. (br) Distance between $RD_{obs}$ and $RD_{true}$ relative to the interval length, as a function of the strength of the dependence between $C$ and $D$ when measured by the Youden index.}\label{fig:plots}
\end{figure}

\subsection{Experiments}

In this section, we report some experiments that shed additional light on the relationships between the various risk differences under the conditions in Theorem \ref{the:nonmonotone}. For the experiments, we let $Y$ be binary. Then, we randomly parameterize 10000 times the causal graph to the left in Figure \ref{fig:graphs} by parameterizing the terms in the right-hand side of Equation \ref{eq:factorization} with parameter values drawn from a uniform distribution, while enforcing the assumptions in the theorem. For each parameterization, we compute $RD_{true}$, $RD_{obs}$ and $RD_{crude}$. Figure \ref{fig:plots} summarizes the results.\footnote{Code available at \texttt{https://www.dropbox.com/s/t75z8yro1e9higq/nonmonotonicity3.R?dl=0}.} The top left plot shows that most intervals are relatively small and, thus, that $RD_{obs}$ is close to $RD_{true}$ in most cases. However, the top right plot shows that $RD_{obs}$ tends to be closer to $RD_{crude}$ than to $RD_{true}$. The bottom left plot is a zoom of the previous plot at the smallest intervals. Finally, the bottom right plot shows that the stronger the dependence between $C$ and $D$ as measured by the Youden index (i.e., $p(d|c)+p(\nd|\nc)-1$), the closer $RD_{obs}$ is to $RD_{true}$. In summary, $RD_{obs}$ is a reasonable approximation to $RD_{true}$, but it is biased towards $RD_{crude}$. This may be a problem when the interval between $RD_{crude}$ and $RD_{true}$ is large. However, the length of the interval is unknown in practice, and we doubt substantive knowledge may provide hints on it. The bias decreases with increasing dependence between $C$ and $D$. Although the strength of this dependence is unknown in practice, substantive knowledge may give hints on it.

\section{Bounding the True Risk Difference}\label{sec:nonmonotone2}

Theorems and Corollaries \ref{the:Penna2020}-\ref{cor:nonmonotone2b} do not hold if the assumption that $p(a|c)=p(\na|\nc) \geq 0.5$ is replaced by the weaker assumption that $p(\na|\nc) \geq p(a|c) \geq 0.5$. Likewise for the assumption that $p(d|c)=p(\nd|\nc) \geq 0.5$. However, \citet[Theorems 5 and 6]{Penna2020} proves that the assumption $p(a|c)=p(\na|\nc) \geq 0.5$ can be relaxed and still $RD_{crude}$ and $RD_{obs}$ bound $RD_{true}$. We re-state this result in the next theorem.

\begin{theorem}[Pe\~na, 2020, Theorems 5 and 6]\label{the:Penna20202}
Consider the causal graph to the left in Figure \ref{fig:graphs}. Let $p(c)=0.5$, $p(d|c)=p(\nd|\nc) \geq 0.5$ and $p(\na|\nc) \geq p(a|c) \geq 0.5$. If $E[Y|a,c] - E[Y|a,\nc] \geq E[Y|\na,\nc] - E[Y|\na,c] \geq 0$, then $RD_{crude} \geq RD_{true}$ and $RD_{obs} \geq RD_{true}$. If $E[Y|a,c] - E[Y|a,\nc] \leq E[Y|\na,\nc] - E[Y|\na,c] \leq 0$, then $RD_{crude} \leq RD_{true}$ and $RD_{obs} \leq RD_{true}$.
\end{theorem}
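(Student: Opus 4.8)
The plan is to reduce all three risk differences to the four conditional means $m_{11}=E[Y|a,c]$, $m_{10}=E[Y|a,\nc]$, $m_{01}=E[Y|\na,c]$, $m_{00}=E[Y|\na,\nc]$ together with the parameters $\beta=p(a|c)$, $\beta'=p(a|\nc)$ and $\delta=p(d|c)$, and then to read the inequalities off elementary bounds on rational functions of $\beta,\beta',\delta$. Under the hypotheses one has $\beta\geq\tfrac12$, the condition $p(\na|\nc)\geq p(a|c)$ becomes $\beta'\leq 1-\beta$, and $p(\nd|\nc)=\delta$; this last equality combined with $p(c)=\tfrac12$ already forces $p(d)=p(\nd)=\tfrac12$. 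Since $p(c)=\tfrac12$, we have $RD_{true}=\tfrac12(m_{11}+m_{10}-m_{01}-m_{00})$.

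The first main step is to compute $RD_{obs}-RD_{true}$. Using $Y\ci D\mid(A,C)$, $E[Y|a,d]=m_{11}p(c|a,d)+m_{10}p(\nc|a,d)$ with $p(c|a,d)\propto p(c)p(a|c)p(d|c)$, and likewise for the other three cells. Writing $D_1=\beta\delta+\beta'(1-\delta)$ and $D_2=\beta(1-\delta)+\beta'\delta$, and letting $D_1',D_2'$ be obtained by replacing $(\beta,\beta')$ with $(1-\beta,1-\beta')$, a short calculation gives $p(c|a,d)+p(c|a,\nd)-1=\tfrac{\delta(1-\delta)(\beta-\beta')(\beta+\beta')}{D_1D_2}=:U$, while the corresponding quantity for the untreated equals $-V$ with $V=\tfrac{\delta(1-\delta)(\beta-\beta')(2-\beta-\beta')}{D_1'D_2'}$. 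Because $p(d)=p(\nd)=\tfrac12$, collecting terms yields $RD_{obs}-RD_{true}=\tfrac12\bigl(U(m_{11}-m_{10})-V(m_{00}-m_{01})\bigr)$. Moreover the value $\delta=\tfrac12$ is admissible and makes $D$ independent of $C$, hence of $(A,C,Y)$ jointly, so there $RD_{obs}=RD_{crude}$; thus it suffices to prove $RD_{obs}\geq RD_{true}$ for all admissible parameters, and the claim about $RD_{crude}$ comes along for free.

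The argument then reduces to the two facts $U\geq V\geq 0$: given these, the hypothesis $m_{11}-m_{10}\geq m_{00}-m_{01}\geq 0$ gives $U(m_{11}-m_{10})\geq V(m_{11}-m_{10})\geq V(m_{00}-m_{01})$, so $RD_{obs}\geq RD_{true}$. Nonnegativity is immediate because $\beta-\beta'\geq 2\beta-1\geq 0$ and $\delta(1-\delta)\geq 0$. The only genuinely computational point — and the part I expect to be the main obstacle — is $U\geq V$, equivalently $(\beta+\beta')D_1'D_2'\geq(2-\beta-\beta')D_1D_2$. I would settle it with the substitution $s=\beta+\beta'$, $r=\beta-\beta'$, $t=2\delta-1$: then $D_1-D_2=tr$ and $D_1'-D_2'=-tr$, so $D_1D_2=\tfrac14(s^2-t^2r^2)$ and $D_1'D_2'=\tfrac14((2-s)^2-t^2r^2)$, whence $(\beta+\beta')D_1'D_2'-(2-\beta-\beta')D_1D_2=\tfrac12(1-s)\bigl(s(2-s)+t^2r^2\bigr)$, which is nonnegative precisely because the hypotheses force $s=\beta+\beta'\leq 1$. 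Finally, the second half of the theorem follows from the first applied to the outcome $-Y$: this negates every $m_{ij}$ and every risk difference and turns the hypothesis $m_{11}-m_{10}\leq m_{00}-m_{01}\leq 0$ into that of the first half, reversing all conclusions. Degenerate parameter values that make a denominator such as $D_2$ vanish correspond to conditioning events of probability zero and can be handled by a continuity/limiting argument.
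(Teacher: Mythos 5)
Your proof is correct, and it takes a genuinely different route from the paper. In fact the paper never proves Theorem~\ref{the:Penna20202} itself (it is imported from Pe\~na, 2020); the nearest in-paper argument is the supplementary proof of Theorem~\ref{the:nonmonotone3}, which subsumes it. That proof works through log-odds and the logistic sigmoid, reduces the $RD_{obs}$ half to $p(c|a,d)p(d)+p(c|a,\nd)p(\nd) \geq p(\nc|\na,d)p(d)+p(\nc|\na,\nd)p(\nd) \geq 0.5$, and for the first of these inequalities --- which, since $p(d)=p(\nd)=0.5$ here, is exactly your $U \geq V$ --- the author explicitly reports failing to find a hand proof and instead certifies it with \texttt{FindInstance} in Mathematica. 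Your factorization $(\beta+\beta')D_1'D_2'-(2-\beta-\beta')D_1D_2=\tfrac12(1-s)\bigl(s(2-s)+t^2r^2\bigr)$ with $s=\beta+\beta'\leq 1$ is therefore a welcome elementary, checkable proof of (the $p(d|c)=p(\nd|\nc)$ special case of) the one step the paper outsources to a computer; it does not, however, cover the more general hypothesis $p(\nd|\nc)\geq p(d|c)\geq 0.5$ of Theorem~\ref{the:nonmonotone3}, where $p(d)\neq 0.5$ in general. A second genuine difference: the paper proves the $RD_{crude}$ and $RD_{obs}$ bounds separately (the former via $p(c|a)\geq p(\nc|\na)\geq 0.5$ and the unimodality of $x(1-x)$), whereas your observation that $\delta=\tfrac12$ is admissible, leaves the joint law of $(A,C,Y)$ untouched, and collapses $RD_{obs}$ onto $RD_{crude}$ yields the crude bound as a free corollary of the observed one. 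Your $-Y$ symmetry for the second sign case matches the paper's ``similar to the above'' treatment. One cosmetic loose end: if a denominator such as $D_2\propto p(a,\nd)$ vanishes, $E[Y|a,\nd]$ and hence $RD_{obs}$ are simply undefined, so it is cleaner to declare the statement vacuous there than to invoke a limiting argument.
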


Note that the previous theorem does not determine the order between $RD_{crude}$ and $RD_{obs}$. Thus, it cannot be used to decide which of the two comes closer to $RD_{true}$. However, the theorem may be useful to conclude whether $RD_{true}$ is positive or negative. For instance, the last result in the theorem allows us to conclude that $RD_{true} > 0$ whenever $\max(RD_{crude}, RD_{obs}) > 0$.

The following theorems show that the assumption that $p(d|c)=p(\nd|\nc) \geq 0.5$ in the previous theorem can also be relaxed.

\begin{theorem}\label{the:nonmonotone3}
Consider the causal graph to the left in Figure \ref{fig:graphs}. Let $p(c)=0.5$, $p(\na|\nc) \geq p(a|c) \geq 0.5$ and $p(\nd|\nc) \geq p(d|c) \geq 0.5$. If $E[Y|a,c] - E[Y|a,\nc] \geq E[Y|\na,\nc] - E[Y|\na,c] \geq 0$, then $RD_{crude} \geq RD_{true}$ and $RD_{obs} \geq RD_{true}$. If $E[Y|a,c] - E[Y|a,\nc] \leq E[Y|\na,\nc] - E[Y|\na,c] \leq 0$, then $RD_{crude} \leq RD_{true}$ and $RD_{obs} \leq RD_{true}$.
\end{theorem}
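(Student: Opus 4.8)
The plan is to recast all three risk differences in terms of the two ``confounder effects'' $\alpha_1 = E[Y|a,c] - E[Y|a,\nc]$ and $\alpha_0 = E[Y|\na,\nc] - E[Y|\na,c]$, which in the first case satisfy $\alpha_1 \geq \alpha_0 \geq 0$ by hypothesis. Since $Y \ci D \mid A,C$, each $E[Y|a,d]$ equals the convex combination $p(c|a,d)\,E[Y|a,c] + p(\nc|a,d)\,E[Y|a,\nc]$, so $\sum_d p(d)\,E[Y|a,d] = E[Y|a,\nc] + w_a\,\alpha_1$ with $w_a = \sum_d p(d)\,p(c|a,d)$; similarly $E[Y|a] = E[Y|a,\nc] + \mu_a\,\alpha_1$ with $\mu_a = p(c|a)$, and the analogues hold with $a$ replaced by $\na$ (with a sign change, because of how $\alpha_0$ is defined). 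Writing $RD_{true}$ the same way, with effective weight $p(c) = 1/2$ on $c$, one gets
\[
RD_{obs} - RD_{true} = (w_a - 1/2)\,\alpha_1 + (w_{\na} - 1/2)\,\alpha_0, \qquad RD_{crude} - RD_{true} = (\mu_a - 1/2)\,\alpha_1 + (\mu_{\na} - 1/2)\,\alpha_0,
\]
where $w_{\na} = \sum_d p(d)\,p(c|\na,d)$ and $\mu_{\na} = p(c|\na)$. The key elementary observation is: if $m_1 \geq 1/2$ and $m_1 + m_0 \geq 1$, then $(m_1 - 1/2)\alpha_1 + (m_0 - 1/2)\alpha_0 \geq (m_1 - 1/2)\alpha_0 + (m_0 - 1/2)\alpha_0 = (m_1 + m_0 - 1)\alpha_0 \geq 0$, using $\alpha_1 \geq \alpha_0 \geq 0$. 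So it suffices to prove $\mu_a \geq 1/2$, $\mu_a + \mu_{\na} \geq 1$, $w_a \geq 1/2$ and $w_a + w_{\na} \geq 1$.

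The statements about $\mu_a,\mu_{\na}$ and $w_a$ are short. Since $p(c) = 1/2$, $\mu_a = p(a|c)/(p(a|c)+p(a|\nc)) \geq 1/2$ because $p(a|c) \geq 1/2 \geq 1 - p(\na|\nc) = p(a|\nc)$; and a one-line computation gives $\mu_a + \mu_{\na} - 1 = (p(a|c) - p(a|\nc))(1 - p(a|c) - p(a|\nc)) / \bigl((p(a|c)+p(a|\nc))(2 - p(a|c) - p(a|\nc))\bigr) \geq 0$, both numerator factors being nonnegative (the second since $p(a|\nc) = 1 - p(\na|\nc) \leq 1 - p(a|c)$). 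This in fact just reproves the $RD_{crude}$ part, which is already covered by Theorem \ref{the:Penna20202} since $RD_{crude}$ does not involve $D$. For $w_a$, note that $A \ci D \mid C$ still holds within the stratum $\{D = d\}$, so Bayes' rule gives $p(c|a,d) - p(c|d) = p(c|d)\,p(\nc|d)\,(p(a|c) - p(a|\nc)) / p(a|d)$; averaging over $d$ against $p(d)$ and using $\sum_d p(d)\,p(c|d) = p(c) = 1/2$ yields
\[
w_a - 1/2 = \bigl(p(a|c) - p(a|\nc)\bigr) \sum_d \frac{p(d)\,p(c|d)\,p(\nc|d)}{p(a|d)} \geq 0,
\]
and the same argument with $a$ replaced by $\na$ gives $w_{\na} - 1/2 = -\bigl(p(a|c) - p(a|\nc)\bigr) \sum_d p(d)\,p(c|d)\,p(\nc|d) / p(\na|d) \leq 0$.

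The remaining inequality $w_a + w_{\na} \geq 1$ is the crux. Adding the last two displays,
\[
w_a + w_{\na} - 1 = \bigl(p(a|c) - p(a|\nc)\bigr) \sum_d p(d)\,p(c|d)\,p(\nc|d) \left( \frac{1}{p(a|d)} - \frac{1}{p(\na|d)} \right),
\]
which is trivial if $p(a|c) = p(a|\nc)$, and otherwise amounts to showing the sum over $d$ is nonnegative. Substituting $p(c|d) = p(d|c)/(p(d|c)+p(d|\nc))$, $p(d) = (p(d|c)+p(d|\nc))/2$ and $p(a|d) = (p(a|c)p(d|c)+p(a|\nc)p(d|\nc))/(p(d|c)+p(d|\nc))$ (and the analogues for $\nd$), and clearing denominators, turns this into a polynomial inequality in $p(a|c), p(a|\nc), p(d|c), p(d|\nc)$ over the region $p(a|c) \geq 1/2 \geq p(a|\nc)$, $p(a|c)+p(a|\nc) \leq 1$, $p(d|c) \geq 1/2 \geq p(d|\nc)$, $p(d|c)+p(d|\nc) \leq 1$; the last constraint is exactly $p(\nd|\nc) \geq p(d|c)$ and is genuinely used, as the inequality is false without it. Verifying this polynomial inequality is where the bulk of the work lies; it can be done by elementary algebra, and one notes that equality holds when $D = C$ (i.e.\ $p(d|c) = 1$, $p(d|\nc) = 0$) and when $p(a|c) = p(a|\nc)$. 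A more conceptual alternative is to write the channel $C \to D$ as a mixture --- ``$D = C$ with probability $p(d|c) - p(d|\nc)$, and $D$ independent of $C$ otherwise'' --- parametrized by the mixing weight $\theta \in [0,1]$, and to show that $w_a + w_{\na}$ is non-increasing in $\theta$, hence bounded below by its value $1$ at $\theta = 1$; the opposite endpoint $\theta = 0$ is the already-established bound $\mu_a + \mu_{\na} \geq 1$. Once $w_a + w_{\na} \geq 1$ is in hand, the elementary observation of the first paragraph finishes the first case, and the second case follows by applying the first to $-Y$, which flips the signs of $\alpha_1$, $\alpha_0$ and of all three risk differences.
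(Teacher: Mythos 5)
Your reduction is correct and is essentially the paper's own decomposition in different clothing: your $w_a-1/2$ and $1/2-w_{\na}$ are exactly the paper's $\alpha$ and $\beta$ in the identity $RD_{obs}=RD_{true}+\alpha\,(E[Y|a,c]-E[Y|a,\nc])-\beta\,(E[Y|\na,\nc]-E[Y|\na,c])$, and your ``key elementary observation'' is the same final step. The $RD_{crude}$ half is fine (your factorization $\mu_a+\mu_{\na}-1=(p(a|c)-p(a|\nc))(1-p(a|c)-p(a|\nc))/\big(s(2-s)\big)$ is a clean alternative to the paper's $x(1-x)$ argument), and the identity $w_a-1/2=(p(a|c)-p(a|\nc))\sum_d p(d)p(c|d)p(\nc|d)/p(a|d)$ checks out.

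The genuine gap is exactly where you say ``the bulk of the work lies'': you never prove $w_a+w_{\na}\geq 1$. You reduce it to a four-variable polynomial inequality and assert it ``can be done by elementary algebra,'' but no algebra is supplied; this is not a routine step you may wave at, because it is precisely the inequality the paper's author reports being unable to establish by hand --- the published proof verifies it by an exhaustive symbolic search with Mathematica's \texttt{FindInstance} over the constrained region. Your ``conceptual alternative'' does not close the gap either: the mixture deformation is underspecified (as $\theta$ varies with $q$ fixed, the intermediate channels leave the hypothesis region, e.g.\ $p(d|c)\geq 1/2$ fails near $\theta=0$ unless $q=1/2$, so you would at minimum have to argue the inequality along a path that exits the assumed constraints), and the claimed monotonicity of $w_a+w_{\na}$ in $\theta$ is itself asserted without proof and is not obviously easier than the original inequality. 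Until one of these two routes is actually carried out, the $RD_{obs}$ half of the theorem is not proved. Everything surrounding that single inequality, including the treatment of the second case via $-Y$, is sound.
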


\begin{theorem}\label{the:nonmonotone4}
Consider the causal graph to the left in Figure \ref{fig:graphs}. Let $p(c)=0.5$, $p(a|c) \leq p(\na|\nc) \leq 0.5$ and $p(d|c) \leq p(\nd|\nc) \leq 0.5$. If $E[Y|a,c] - E[Y|a,\nc] \geq E[Y|\na,\nc] - E[Y|\na,c] \geq 0$, then $RD_{crude} \leq RD_{true}$ and $RD_{obs} \leq RD_{true}$. If $E[Y|a,c] - E[Y|a,\nc] \leq E[Y|\na,\nc] - E[Y|\na,c] \leq 0$, then $RD_{crude} \geq RD_{true}$ and $RD_{obs} \geq RD_{true}$.
\end{theorem}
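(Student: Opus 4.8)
The plan is to deduce Theorem~\ref{the:nonmonotone4} from Theorem~\ref{the:nonmonotone3} by relabelling the two states of the unobserved confounder $C$. The point is that, since $C$ is never observed, the marginal law $p(A,D,Y)$ is unaffected by the swap $c\leftrightarrow\nc$; hence $RD_{crude}$ and $RD_{obs}$, which are functionals of $p(A,D,Y)$ alone, are invariant under this swap. So is $RD_{true}$: written as $RD_{true}=\big(E[Y|a,c]-E[Y|\na,c]\big)p(c)+\big(E[Y|a,\nc]-E[Y|\na,\nc]\big)p(\nc)$, it is a sum over the two states of $C$, and swapping the labels merely interchanges the two summands. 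Finally, the factorization~\eqref{eq:factorization} is preserved by the swap, so the relabelled model is again an instance of the causal graph in Figure~\ref{fig:graphs} and Theorem~\ref{the:nonmonotone3} applies to it.

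First I would set up the relabelled model $\widetilde p$, in which the new state $c$ denotes the old state $\nc$. Writing $s=p(a|c)$, $t=p(\na|\nc)$, $u=p(d|c)$, $v=p(\nd|\nc)$, the hypotheses of Theorem~\ref{the:nonmonotone4} read $s\le t\le 0.5$ and $u\le v\le 0.5$. A one-line computation gives $\widetilde p(c)=0.5$, $\widetilde p(a|c)=p(a|\nc)=1-t$, $\widetilde p(\na|\nc)=p(\na|c)=1-s$, $\widetilde p(d|c)=p(d|\nc)=1-v$ and $\widetilde p(\nd|\nc)=p(\nd|c)=1-u$, so that $\widetilde p(\na|\nc)\ge\widetilde p(a|c)\ge 0.5$ and $\widetilde p(\nd|\nc)\ge\widetilde p(d|c)\ge 0.5$ --- exactly the structural assumptions of Theorem~\ref{the:nonmonotone3}. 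Likewise $\widetilde E[Y|A,c]=E[Y|A,\nc]$ and $\widetilde E[Y|A,\nc]=E[Y|A,c]$, so the first hypothesis chain of Theorem~\ref{the:nonmonotone4}, namely $E[Y|a,c]-E[Y|a,\nc]\ge E[Y|\na,\nc]-E[Y|\na,c]\ge 0$, turns into $\widetilde E[Y|a,c]-\widetilde E[Y|a,\nc]\le\widetilde E[Y|\na,\nc]-\widetilde E[Y|\na,c]\le 0$, which is precisely the \emph{second} hypothesis chain of Theorem~\ref{the:nonmonotone3}.

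I would then invoke Theorem~\ref{the:nonmonotone3} for $\widetilde p$: its second case yields $\widetilde{RD}_{crude}\le\widetilde{RD}_{true}$ and $\widetilde{RD}_{obs}\le\widetilde{RD}_{true}$, and by the invariance above these are $RD_{crude}\le RD_{true}$ and $RD_{obs}\le RD_{true}$ in the original model, which is the first conclusion of Theorem~\ref{the:nonmonotone4}. The second case is symmetric: under the same relabelling, $E[Y|a,c]-E[Y|a,\nc]\le E[Y|\na,\nc]-E[Y|\na,c]\le 0$ becomes the first hypothesis chain of Theorem~\ref{the:nonmonotone3}, giving $RD_{crude}\ge RD_{true}$ and $RD_{obs}\ge RD_{true}$.

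There is no real obstacle here --- the argument is pure bookkeeping --- but the step that must be done carefully is checking that every hypothesis transforms exactly as claimed: the two ``$\le 0.5$'' probability bounds of Theorem~\ref{the:nonmonotone4} must become the two ``$\ge 0.5$'' bounds of Theorem~\ref{the:nonmonotone3} \emph{with the inequality between $p(a|c)$ and $p(\na|\nc)$ (resp.\ between $p(d|c)$ and $p(\nd|\nc)$) pointing the right way}, and the monotonicity-type condition on $E[Y|A,C]$ must exchange its two cases rather than stay put. If one preferred not to cite Theorem~\ref{the:nonmonotone3} as a black box, the same relabelling could instead be inserted into the parametric computation behind its proof, but the reduction described above is the shortest route.
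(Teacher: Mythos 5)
Your reduction is correct: the swap $c\leftrightarrow\nc$ preserves the factorization in Equation \ref{eq:factorization}, leaves $p(A,D,Y)$ (hence $RD_{crude}$ and $RD_{obs}$) and the adjustment-formula expression for $RD_{true}$ unchanged, maps the hypotheses $p(a|c)\le p(\na|\nc)\le 0.5$ and $p(d|c)\le p(\nd|\nc)\le 0.5$ exactly onto $p(\na|\nc)\ge p(a|c)\ge 0.5$ and $p(\nd|\nc)\ge p(d|c)\ge 0.5$ for the relabelled model, and negates both sides of the condition on $E[Y|A,C]$ so that the two cases of Theorem \ref{the:nonmonotone3} are interchanged; chasing the conclusions back gives precisely Theorem \ref{the:nonmonotone4}. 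This is, however, not how the paper argues: its proof of Theorem \ref{the:nonmonotone4} mirrors the internal structure of the proof of Theorem \ref{the:nonmonotone3}, re-deriving the sign-reversed inequalities $p(c|a)\le p(\nc|\na)\le 0.5$ and $p(c|a,d)p(d)+p(c|a,\nd)p(\nd)\le p(\nc|\na,d)p(d)+p(\nc|\na,\nd)p(\nd)\le 0.5$ and substituting them into the same decomposition $RD=RD_{true}+\alpha\big(E[Y|a,c]-E[Y|a,\nc]\big)-\beta\big(E[Y|\na,\nc]-E[Y|\na,c]\big)$, now with $\alpha\le\beta\le 0$. Your route is arguably preferable: the proof of Theorem \ref{the:nonmonotone3} rests at one point on a computer-algebra verification (\texttt{FindInstance}) of the key inequality between $p(c|a,d)p(d)+p(c|a,\nd)p(\nd)$ and $p(\nc|\na,d)p(d)+p(\nc|\na,\nd)p(\nd)$, and the paper's sketch for Theorem \ref{the:nonmonotone4} implicitly asserts the mirrored inequality without a separate verification; your relabelling argument obtains it for free as a formal consequence of the already-established theorem, at the cost of treating Theorem \ref{the:nonmonotone3} as a black box rather than exhibiting the explicit $\alpha,\beta$ decomposition. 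The only points that needed care --- that the probability bounds transform with the inequality between $p(a|c)$ and $p(\na|\nc)$ pointing the right way, and that the two outcome cases swap rather than stay fixed --- you checked explicitly, so there is no gap.
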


Returning to our example of three diseases $A$, $D$ and $Y$ and a gene variant $C$, the assumptions for the first result in Theorem \ref{the:nonmonotone3} mean that (i) half of the population carry the gene variant $C$, i.e. $p(c)=0.5$, (ii) not carrying $C$ protects against $A$ and $D$ more than carrying it predisposes to suffer the diseases, i.e. $p(\na|\nc) \geq p(a|c) \geq 0.5$ and $p(\nd|\nc) \geq p(d|c) \geq 0.5$, and (iii) carrying $C$ increases the average severity of $Y$ for the individuals suffering $A$ more than it decreases the severity for the rest.

The last two theorems can be strengthened for $RD_{crude}$ as follows. Analogous results do not hold for $RD_{obs}$, though. 

\begin{theorem}\label{the:nonmonotone5}
Consider the causal graph to the left in Figure \ref{fig:graphs}. Let $p(c) \leq 0.5$ and $p(\na|\nc) \geq p(a|c) \geq 0.5$. If $E[Y|a,c] - E[Y|a,\nc] \geq E[Y|\na,\nc] - E[Y|\na,c] \geq 0$, then $RD_{crude} \geq RD_{true}$. If $E[Y|a,c] - E[Y|a,\nc] \leq E[Y|\na,\nc] - E[Y|\na,c] \leq 0$, then $RD_{crude} \leq RD_{true}$.
\end{theorem}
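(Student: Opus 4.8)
\emph{Proof proposal.} The plan is to express $RD_{crude} - RD_{true}$ as a single product whose sign is transparent from the hypotheses. First I would fix notation: set $p = p(c)$, $s = p(a|c)$, $t = p(a|\nc)$, and
\[
\Delta_a = E[Y|a,c] - E[Y|a,\nc], \qquad \Delta_{\na} = E[Y|\na,\nc] - E[Y|\na,c],
\]
so that the assumption $p(\na|\nc) \geq p(a|c) \geq 0.5$ reads $1-t \geq s \geq 0.5$ (hence in particular $s \geq t$), and the two cases of the theorem read $\Delta_a \geq \Delta_{\na} \geq 0$ and $\Delta_a \leq \Delta_{\na} \leq 0$, respectively.

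Next, using Bayes' rule I would write $E[Y|a]$ as the convex combination of $E[Y|a,c]$ and $E[Y|a,\nc]$ with weight $sp/p(a)$ on the first term, where $p(a) = sp + t(1-p)$, and similarly $E[Y|\na]$ as a convex combination of $E[Y|\na,c]$ and $E[Y|\na,\nc]$ with $p(\na) = (1-s)p + (1-t)(1-p) = 1 - p(a)$. Subtracting from this the representation $RD_{true} = p\,E[Y|a,c] + (1-p)\,E[Y|a,\nc] - p\,E[Y|\na,c] - (1-p)\,E[Y|\na,\nc]$, each treatment arm contributes a term equal to a weight-shift factor times the within-arm contrast, and after simplification the difference collapses to
\[
RD_{crude} - RD_{true} = p(1-p)(s-t)\left(\frac{\Delta_a}{p(a)} - \frac{\Delta_{\na}}{p(\na)}\right).
\]
I would sanity-check this identity on the specialization $p(c)=0.5$, where it should match the computation behind Theorem \ref{the:Penna20202}. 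Since $p \in [0,1]$ and $s \geq t$, the prefactor $p(1-p)(s-t)$ is nonnegative, so the sign of $RD_{crude}-RD_{true}$ equals the sign of the bracketed term.

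The one ingredient that genuinely uses $p(c) \leq 0.5$ (rather than equality) is the bound $p(a) \leq 1/2 \leq p(\na)$: from $t \leq 1-s$ we get $p(a) \leq sp + (1-s)(1-p) = s(2p-1) + (1-p)$, which, because $2p-1 \leq 0$ and $s \geq 1/2$, is at most $\tfrac12(2p-1)+(1-p) = \tfrac12$; then $p(\na) = 1 - p(a) \geq \tfrac12$. With $0 < p(a) \leq p(\na)$ in hand the proof finishes by monotonicity bookkeeping: if $\Delta_a \geq \Delta_{\na} \geq 0$ then $\Delta_a/p(a) \geq \Delta_{\na}/p(a) \geq \Delta_{\na}/p(\na)$, so the bracket is nonnegative and $RD_{crude} \geq RD_{true}$; if $\Delta_a \leq \Delta_{\na} \leq 0$ the same chain of inequalities reverses, giving $RD_{crude} \leq RD_{true}$. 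I expect the only real obstacle to be spotting the collapse to the displayed identity together with the bound $p(a) \leq 1/2$; everything after that is routine. Throughout I assume the standing positivity $p(a), p(\na) > 0$ needed for the conditional expectations to be defined, and I note that in the boundary case $p=0$ one has $RD_{crude}=RD_{true}$ directly, so the asserted inequalities hold trivially there.
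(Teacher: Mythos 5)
Your proposal is correct, and the displayed identity checks out: expanding $p(c|a)-p(c)$ and $p(\nc|\na)-p(\nc)$ with $p=p(c)$, $s=p(a|c)$, $t=p(a|\nc)$ gives exactly $p(1-p)(s-t)/p(a)$ and $p(1-p)(s-t)/p(\na)$ respectively, so your factorization is the paper's decomposition $RD_{crude}-RD_{true}=\alpha\,(E[Y|a,c]-E[Y|a,\nc])-\beta\,(E[Y|\na,\nc]-E[Y|\na,c])$ with $\alpha=p(c|a)-p(c)$ and $\beta=p(\nc|\na)-p(\nc)$ written out explicitly. At that top level you and the paper agree; where you genuinely diverge is in how the key inequality $\alpha\geq\beta\geq 0$ is established. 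The paper represents the posteriors via the logistic sigmoid, uses $\sigma'(z)=\sigma(z)(1-\sigma(z))$ to derive the concavity-type inequality $\sigma(-u+v)-\sigma(-u)\geq\sigma(u+v)-\sigma(u)$ for $u,v\geq 0$, and combines this with the auxiliary bound $p(a|c)/p(a|\nc)\geq p(\na|\nc)/p(\na|c)$ from its Equation \ref{eq:kuku}. You instead note that $\alpha$ and $\beta$ share the common nonnegative factor $p(1-p)(s-t)$, so $\alpha\geq\beta$ reduces to $p(a)\leq p(\na)$, which you obtain from the elementary algebraic bound $p(a)\leq s(2p-1)+(1-p)\leq\tfrac12$. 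Your route is shorter, avoids the sigmoid machinery entirely, and makes the role of the hypothesis $p(c)\leq 0.5$ completely transparent; the paper's version has the advantage of reusing the $\sigma$-identities that drive its proofs of Theorems \ref{the:nonmonotone}, \ref{the:nonmonotone3} and \ref{the:nonmonotone6}, so the arguments share one framework. Your handling of the degenerate cases ($p\in\{0,1\}$, $s=t$, and the standing positivity of $p(a)$ and $p(\na)$) is also cleaner than what the paper makes explicit.
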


\begin{theorem}\label{the:nonmonotone6}
Consider the causal graph to the left in Figure \ref{fig:graphs}. Let $p(c) \geq 0.5$ and $p(a|c) \leq p(\na|\nc) \leq 0.5$. If $E[Y|a,c] - E[Y|a,\nc] \geq E[Y|\na,\nc] - E[Y|\na,c] \geq 0$, then $RD_{crude} \leq RD_{true}$. If $E[Y|a,c] - E[Y|a,\nc] \leq E[Y|\na,\nc] - E[Y|\na,c] \leq 0$, then $RD_{crude} \geq RD_{true}$.
\end{theorem}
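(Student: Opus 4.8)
The plan is to attack $RD_{crude}-RD_{true}$ head on and factor it into a form whose sign can be read off from the hypotheses. First, expand $E[Y|a]=E[Y|a,c]\,p(c|a)+E[Y|a,\nc]\,p(\nc|a)$ and the analogous expression for $E[Y|\na]$, subtract the decomposition of $RD_{true}$ over $p(c),p(\nc)$, and write $\epsilon_a:=p(c|a)-p(c)$ and $\epsilon_{\na}:=p(c|\na)-p(c)$ (so that $p(\nc|a)-p(\nc)=-\epsilon_a$, etc.). All the outcome terms regroup into
\[
RD_{crude}-RD_{true}=\epsilon_a\bigl(E[Y|a,c]-E[Y|a,\nc]\bigr)-\epsilon_{\na}\bigl(E[Y|\na,c]-E[Y|\na,\nc]\bigr).
\]
Abbreviating $\Delta_a:=E[Y|a,c]-E[Y|a,\nc]$ and $\Delta_{\na}:=E[Y|\na,c]-E[Y|\na,\nc]$, the first hypothesis says $\Delta_a\ge-\Delta_{\na}\ge 0$, i.e.\ $\Delta_{\na}\le 0$ and $\Delta_a\ge|\Delta_{\na}|\ge 0$; the second hypothesis is its mirror image.

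The second step is to pin down the signs and relative sizes of $\epsilon_a$ and $\epsilon_{\na}$ by Bayes' rule. Writing $p(c)=q$, $p(a|c)=s$ and $p(\na|\nc)=t$ (so the hypotheses read $q\ge 0.5$ and $s\le t\le 0.5$, and $p(a|\nc)=1-t$), a short computation gives
\[
\epsilon_a=\frac{q(1-q)(s+t-1)}{p(a)},\qquad\epsilon_{\na}=\frac{q(1-q)(1-s-t)}{p(\na)} .
\]
Since $s+t\le 1$, this already yields $\epsilon_a\le 0\le\epsilon_{\na}$, and since the two numerators agree in absolute value the inequality $|\epsilon_a|\ge\epsilon_{\na}$ is equivalent to $p(a)\le p(\na)$. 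The crux is therefore this last inequality: from $p(\na)-p(a)=q(1-2s)-(1-q)(1-2t)$ together with $q\ge 1-q\ge 0$ and $1-2s\ge 1-2t\ge 0$, the first product dominates the second termwise, so $p(\na)\ge p(a)$ and hence $|\epsilon_a|\ge\epsilon_{\na}\ge 0$.

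Finally, combine the two ingredients. In the first case,
\[
RD_{crude}-RD_{true}=-|\epsilon_a|\,\Delta_a+\epsilon_{\na}|\Delta_{\na}|\le 0,
\]
because $|\epsilon_a|\ge\epsilon_{\na}\ge 0$ and $\Delta_a\ge|\Delta_{\na}|\ge 0$ force $|\epsilon_a|\Delta_a\ge\epsilon_{\na}|\Delta_{\na}|$; thus $RD_{crude}\le RD_{true}$. The second case is the same computation with signs flipped: $RD_{crude}-RD_{true}=|\epsilon_a||\Delta_a|-\epsilon_{\na}|\Delta_{\na}|\ge 0$, now using $|\Delta_a|\ge|\Delta_{\na}|$. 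The only delicate point is the monotonicity bookkeeping behind $p(a)\le p(\na)$ — everything else is a sign chase — and it is precisely there that both $p(c)\ge 0.5$ and $p(a|c)\le p(\na|\nc)\le 0.5$ are consumed; no assumption on $D$ enters since, unlike Theorems \ref{the:nonmonotone3}--\ref{the:nonmonotone4}, $RD_{obs}$ plays no role in the statement.
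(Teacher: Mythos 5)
Your proof is correct, and it reaches the paper's conclusion by a partly different route. The decomposition you derive, $RD_{crude}-RD_{true}=\epsilon_a\,(E[Y|a,c]-E[Y|a,\nc])-\epsilon_{\na}\,(E[Y|\na,c]-E[Y|\na,\nc])$ with $\epsilon_a=p(c|a)-p(c)$ and $\epsilon_{\na}=p(c|\na)-p(c)$, is exactly the paper's Equation \ref{eq:rdtruekuku} (with $\alpha=\epsilon_a$ and $\beta=-\epsilon_{\na}$), and your concluding sign chase coincides with the paper's. Where you genuinely diverge is in establishing the key inequality $|\epsilon_a|\geq\epsilon_{\na}\geq 0$, i.e.\ the paper's $\alpha\leq\beta\leq 0$. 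The paper imports the logistic-sigmoid argument from the proof of Theorem \ref{the:nonmonotone5}: it writes the posteriors as $\sigma(\cdot)$ of log odds, invokes the curvature property $\sigma(-u+v)-\sigma(-u)\leq\sigma(u+v)-\sigma(u)$ for $u\geq 0$, $v\leq 0$, and the reverse of Equation \ref{eq:kuku}, which in turn rests on the unimodality of $x(1-x)$. You instead compute $\epsilon_a$ and $\epsilon_{\na}$ in closed form via Bayes' rule, notice that their numerators are equal in magnitude (both $q(1-q)(1-s-t)$ up to sign), and reduce the whole comparison to $p(a)\leq p(\na)$, obtained by a termwise comparison of $q(1-2s)$ against $(1-q)(1-2t)$. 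This is more elementary and makes transparent exactly where each hypothesis is consumed, whereas the paper's heavier sigmoid machinery is amortized over Theorems \ref{the:nonmonotone3}--\ref{the:nonmonotone5}. Two cosmetic caveats only: the stated equivalence of $|\epsilon_a|\geq\epsilon_{\na}$ with $p(a)\leq p(\na)$ requires the common numerator $q(1-q)(1-s-t)$ to be positive (when it vanishes both $\epsilon$'s are zero and there is nothing to prove), and, as in the paper, one implicitly assumes $p(a)>0$ so that the conditional quantities are defined.
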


\section{Path Diagrams}\label{sec:pathdiagrams}

Finally, we suppose that the variables $A$, $C$, $D$ and $Y$ are all continuous and follow the linear structural equation model represented by the path diagram to the right in Figure \ref{fig:graphs}. The true, crude and partially adjusted average causal effects of $A$ on $Y$ are given by the partial regression coefficients $\beta_{YA \cdot C}$, $\beta_{YA}$ and $\beta_{YA \cdot D}$, respectively. Note that the first cannot be computed because $C$ is unobserved. The following theorem proves that the partially adjusted average causal effect lies between the true and the crude ones and, thus, it comes closer to the true average causal effect than the crude.

\begin{theorem}\label{the:pathdiagram}
Consider the path diagram to the right in Figure \ref{fig:graphs}. Assume that the variables are standardized. If $sign(\beta)=sign(\gamma)$ then $\beta_{YA \cdot C} \leq \beta_{YA \cdot D} \leq \beta_{YA}$, else $\beta_{YA \cdot C} \geq \beta_{YA \cdot D} \geq \beta_{YA}$.
\end{theorem}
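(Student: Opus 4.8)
The plan is to reduce the statement to explicit formulas for the three partial regression coefficients in terms of the path coefficients $\alpha,\beta,\gamma,\delta$. Writing the structural equations as $C=\varepsilon_C$, $A=\beta C+\varepsilon_A$, $D=\delta C+\varepsilon_D$, $Y=\alpha A+\gamma C+\varepsilon_Y$ with mutually independent errors, standardization forces $\mathrm{Var}(A)=\mathrm{Var}(C)=\mathrm{Var}(D)=\mathrm{Var}(Y)=1$; in particular $\mathrm{Var}(\varepsilon_A)=1-\beta^2\ge 0$ and $\mathrm{Var}(\varepsilon_D)=1-\delta^2\ge 0$, so $|\beta|\le 1$ and $|\delta|\le 1$. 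Since all variances equal $1$, covariances and correlations coincide, and a short computation gives $\rho_{AC}=\beta$, $\rho_{AD}=\beta\delta$, $\rho_{YA}=\alpha+\beta\gamma$, and $\rho_{YD}=\delta(\alpha\beta+\gamma)$.

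Next I would read off the three coefficients. Because $\varepsilon_Y$ is uncorrelated with $(A,C)$, the population regression of $Y$ on $A$ and $C$ returns the structural coefficients, so $\beta_{YA\cdot C}=\alpha$. The marginal slope is $\beta_{YA}=\rho_{YA}=\alpha+\beta\gamma$. For $\beta_{YA\cdot D}$ I would use the two-regressor partial-regression formula $\beta_{YA\cdot D}=(\rho_{YA}-\rho_{YD}\rho_{AD})/(1-\rho_{AD}^2)$; the numerator factors as $\alpha(1-\beta^2\delta^2)+\beta\gamma(1-\delta^2)$, so
\[
\beta_{YA\cdot D}=\alpha+\beta\gamma\cdot\frac{1-\delta^2}{1-\beta^2\delta^2}.
\]
Here I would flag that the degenerate case $\beta^2\delta^2=1$ (i.e.\ $|\beta|=|\delta|=1$) must be excluded, since then $A$ and $D$ are perfectly correlated and $\beta_{YA\cdot D}$ is undefined.

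Finally, set $\lambda=(1-\delta^2)/(1-\beta^2\delta^2)$. From $|\beta|\le 1$ and $|\delta|\le 1$ we get $0\le 1-\delta^2\le 1-\beta^2\delta^2$, hence $\lambda\in[0,1]$. Therefore $\beta_{YA\cdot D}-\beta_{YA\cdot C}=\lambda\,\beta\gamma$ and $\beta_{YA}-\beta_{YA\cdot C}=\beta\gamma$ have the same sign, with $|\beta_{YA\cdot D}-\beta_{YA\cdot C}|\le|\beta_{YA}-\beta_{YA\cdot C}|$, so $\beta_{YA\cdot D}$ lies between $\beta_{YA\cdot C}$ and $\beta_{YA}$. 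If $sign(\beta)=sign(\gamma)$ then $\beta\gamma\ge 0$ and the ordering is $\beta_{YA\cdot C}\le\beta_{YA\cdot D}\le\beta_{YA}$; otherwise $\beta\gamma\le 0$ and all inequalities reverse, which is exactly the claim.

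The only real obstacle is the algebraic simplification of $\beta_{YA\cdot D}$, i.e.\ verifying the factorization $\rho_{YA}-\rho_{YD}\rho_{AD}=\alpha(1-\beta^2\delta^2)+\beta\gamma(1-\delta^2)$; everything else is bookkeeping. A secondary point worth making explicit is that standardization of $Y$ is used only for consistency of the model — the bound itself relies solely on $|\beta|\le 1$ and $|\delta|\le 1$, which come from standardizing $A$ and $D$.
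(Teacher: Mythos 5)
Your proof is correct and follows essentially the same route as the paper: both reduce the claim to the formula $\beta_{YA\cdot D}=\alpha+\beta\gamma(1-\delta^2)/(1-\beta^2\delta^2)$ and then use $|\beta|\le 1$, $|\delta|\le 1$ (from standardization) to show the multiplier lies in $[0,1]$. The only difference is that you derive that formula from the two-regressor partial-regression identity while the paper cites it from Pearl, and you additionally flag the degenerate case $|\beta|=|\delta|=1$, which the paper leaves implicit.
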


Note that unlike in the discrete case, no assumptions about the causal relationships of the variables are required to conclude that the partially adjusted average causal effect lies between the true and the crude ones. Note also that the signs of $\beta$ and $\gamma$ tell us whether the partially adjusted average causal effect is an upper or lower bound of the true one.

\section{Discussion}\label{sec:discussion}

One may think that adjusting for a proxy of a latent confounder is always a good idea. However, it is not. In this work, we have described sufficient conditions under which adjusting for a proxy of a latent confounder comes closer to the incomputable true average causal effect than not adjusting at all. Under some conditions, it is even possible to decide whether the partially adjusted approximation is an upper or a lower bound of the true quantity. We have experimentally shown that the partially adjusted approximation can be substantially better than the unadjusted one when the dependence between confounder and proxy is significant. We have also illustrated with an example that the conditions proposed are not too restrictive and unrealistic. Since one must rely on expert knowledge to verify the conditions, we would like to investigate in the future whether realistic, sufficient and empirically testable conditions exist. We would also like to extend this work to when several latent confounders exist.

\section*{Acknowledgments}

This work was funded by the Swedish Research Council (ref. 2019-00245).

\section*{Supplementary Material: Proofs}

\setcounter{theorem}{1}
\begin{theorem}
Consider the causal graph to the left in Figure \ref{fig:graphs}. Let $p(c)=0.5$, $p(a|c)=p(\na|\nc) \geq 0.5$ and $p(d|c)=p(\nd|\nc) \geq 0.5$. Then, $RD_{obs}$ lies between $RD_{true}$ and $RD_{crude}$.
\end{theorem}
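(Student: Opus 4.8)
The plan is to reduce the statement to an elementary inequality in two parameters. Put $s:=p(a\mid c)=p(\na\mid\nc)\ge\tfrac12$ and $t:=p(d\mid c)=p(\nd\mid\nc)\ge\tfrac12$, and abbreviate $X:=E[Y\mid a,c]-E[Y\mid \na,\nc]$ and $W:=E[Y\mid a,\nc]-E[Y\mid \na,c]$. Since $p(c)=\tfrac12$ one gets $p(a)=p(\na)=\tfrac12$, hence $p(c\mid a)=s$ and $p(c\mid\na)=1-s$ by Bayes, and substituting into the definitions gives $RD_{true}=\tfrac12 X+\tfrac12 W$ and $RD_{crude}=sX+(1-s)W$. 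Thus, after this change of variables, $RD_{true}$ and $RD_{crude}$ are the values of the affine map $g(\lambda):=\lambda X+(1-\lambda)W$ at $\lambda=\tfrac12$ and $\lambda=s$ respectively.

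Next I would compute $RD_{obs}$ in the same coordinates. The factorization~\eqref{eq:factorization} gives $A\ci D\mid C$ and $Y\ci D\mid A,C$, so $E[Y\mid a,c,d]=E[Y\mid a,c]$; Bayes' rule then yields $p(d)=p(\nd)=\tfrac12$ together with the posteriors $p(c\mid a,d)=st/u$, $p(c\mid a,\nd)=s(1-t)/v$, $p(c\mid\na,d)=(1-s)t/v$, $p(c\mid\na,\nd)=(1-s)(1-t)/u$, where $u:=st+(1-s)(1-t)$ and $v:=s(1-t)+(1-s)t$. Collecting terms — and exploiting the symmetry induced by $p(a\mid c)=p(\na\mid\nc)$ and $p(d\mid c)=p(\nd\mid\nc)$, which lets $RD_{obs}$ be regrouped as a combination of $X$ and $W$ — I expect $RD_{obs}=\tfrac{sP}{2}X+\tfrac{(1-s)Q}{2}W$ with $P:=t/u+(1-t)/v$ and $Q:=(1-t)/u+t/v$. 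The crucial identities are $u+v=1$ and $sP+(1-s)Q=\tfrac{u}{u}+\tfrac{v}{v}=2$; the second says the coefficients of $RD_{obs}$ also sum to $1$, so $RD_{obs}=g\bigl(\tfrac{sP}{2}\bigr)$.

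Since $g$ is affine, $RD_{obs}$ lies between $RD_{true}$ and $RD_{crude}$ as soon as $\tfrac{sP}{2}$ lies between $\tfrac12$ and $s$, i.e.\ (using $s\ge\tfrac12$) as soon as $1\le sP$ and $P\le 2$. Clearing the positive denominator $uv$ and simplifying, these reduce respectively to $(2s-1)\,t(1-t)\ge 0$ and $(2s-1)(1-s)(2t-1)^2\ge 0$, both of which hold because $\tfrac12\le s\le 1$ and $0\le t\le 1$; that proves the theorem. (One should note in passing that $u,v>0$ under the hypotheses except when $s=t=1$, in which case $RD_{obs}$ is defined only by continuity.)

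I expect the main obstacle to be the middle step: carrying out the four Bayes computations and recognizing that, thanks to the two symmetry assumptions, $RD_{obs}$ collapses to a combination of the \emph{same} quantities $X$ and $W$ that govern $RD_{true}$ and $RD_{crude}$, with coefficients again summing to $1$ — equivalently, the identity $sP+(1-s)Q=2$. Once that structural fact is in hand, the betweenness is exactly the two displayed sign inequalities, which are immediate.
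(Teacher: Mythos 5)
Your proof is correct; I verified the key identities ($u+v=1$, $sP+(1-s)Q=2$, $sP-1=t(1-t)(2s-1)/(uv)$ and $2-P=(2s-1)(1-s)(2t-1)^2/(uv)$), and the edge case $v=0$ occurs only at $s=t=1$ as you note. Your route shares the paper's first half but genuinely differs in the second. The paper also expands $RD_{obs}$ over $C$ and shows $RD_{obs}=RD_{true}+\alpha\,(X-W)$ with $\alpha\ge 0$ (in your notation $\alpha=sP/2-1/2$), but it establishes $\alpha\ge 0$ by a sigmoid/log-odds monotonicity argument ($p(c|a,d)\ge p(c|d)$, $p(c|a,\nd)\ge p(c|\nd)$) rather than by your closed form. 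For the comparison of $RD_{obs}$ with $RD_{crude}$, however, the paper switches to a decomposition over $D$, writing $RD_{crude}=RD_{obs}+\beta\,(E[Y|a,d]-E[Y|a,\nd]+E[Y|\na,d]-E[Y|\na,\nd])$ with $\beta\ge 0$, and then needs a separate sign-matching lemma (its Equations \ref{eq:kk}--\ref{eq:sign}) to show this increment has the same sign as $X-W$; you instead stay entirely in the $C$-decomposition, place all three quantities on the single affine line $g(\lambda)=\lambda X+(1-\lambda)W$ at $\lambda=1/2,\ sP/2,\ s$, and verify $1/2\le sP/2\le s$ by two explicit polynomial inequalities. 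Your version is more self-contained and quantitative: the explicit interpolation parameter $sP/2$ shows exactly how the gap between $RD_{obs}$ and $RD_{true}$ shrinks as $t\to 1$ (consistent with the paper's empirical observation about the Youden index) and as $s\to 1/2$. What the paper's longer route buys is reusability: its sign-matching identities \ref{eq:kk} and \ref{eq:kk2} are exactly what is needed later to prove Corollaries \ref{cor:nonmonotonec} and \ref{cor:nonmonotone2c}, which translate the untestable conditions on $E[Y|A,C]$ into testable ones on $E[Y|A,D]$ — something your argument does not produce as a by-product.
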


\begin{proof}
We start by establishing a relationship between $RD_{obs}$ and $RD_{true}$. First, note that
\[
p(c|a,d) = \frac{p(a,d|c)p(c)}{p(a,d|c)p(c) + p(a,d|\nc)p(\nc)} = \frac{1}{1+\exp(-\delta(a,d))} = \sigma(\delta(a,d))
\]
where
\[
\delta(a,d) = \ln \frac{p(a,d|c)p(c)}{p(a,d|\nc)p(\nc)}
\]
is known as the log odds, and $\sigma()$ is known as the logistic sigmoid function \cite[Section 4.2]{Bishop2006}. Then,
\begin{equation}\label{eq:Bishop}
p(c|a,d) = \sigma \Big( \ln \frac{p(a,d|c)p(c)}{p(a,d|\nc)p(\nc)} \Big) = \sigma \Big( \ln \frac{p(a|c)p(d|c)}{p(a|\nc)p(d|\nc)} \Big)
\end{equation}
where the second equality follows from the assumption that $p(c)=0.5$, and the fact that $A$ and $D$ are conditionally independent given $C$ due to the causal graph under consideration. Likewise,
\begin{equation}\label{eq:Bishop2}
p(c|a,\nd) = \sigma \Big( \ln \frac{p(a|c)p(\nd|c)}{p(a|\nc)p(\nd|\nc)} \Big)
\end{equation}
and
\begin{equation}\label{eq:Bishop3}
p(c|d) = \sigma \Big( \ln \frac{p(d|c)}{p(d|\nc)} \Big)
\end{equation}
and
\begin{equation}\label{eq:Bishop4}
p(c|\nd) = \sigma \Big( \ln \frac{p(\nd|c)}{p(\nd|\nc)} \Big).
\end{equation}
Then, $p(c|a,d) \geq p(c|d)$ and $p(c|a,\nd) \geq p(c|\nd)$ because $\sigma()$ and $\ln()$ are increasing functions and $p(a|c) / p(a|\nc) \geq 1$ by the assumption that $p(a|c) = p(\na|\nc) \geq 0.5$. Then,
\begin{align}\label{eq:foo1}\nonumber
p(c|a,d) p(d) + p(c|a,\nd) p(\nd) &\geq p(c|d) p(d) + p(c|\nd) p(\nd)\\
&= \frac{p(d|c) p(c)}{p(d)} p(d) + \frac{p(\nd|c) p(c)}{p(\nd)} p(\nd) = 0.5
\end{align}
by the assumption that $p(c)=0.5$. Moreover,
\begin{equation}\label{eq:foo2}
p(\nc|a,d) p(d) + p(\nc|a,\nd) p(\nd) = 1 - ( p(c|a,d) p(d) + p(c|a,\nd) p(\nd) ).
\end{equation}

Next, note that
\begin{equation}\label{eq:equality1}
p(\nc|\na,\nd) = \sigma \Big( \ln \frac{p(\na|\nc)p(\nd|\nc)}{p(\na|c)p(\nd|c)} \Big) = \sigma \Big( \ln \frac{p(a|c)p(d|c)}{p(a|\nc)p(d|\nc)} \Big) = p(c|a,d)
\end{equation}
by the assumptions that $p(a|c)=p(\na|\nc)$ and $p(d|c)=p(\nd|\nc)$. Likewise,
\begin{equation}\label{eq:equality2}
p(\nc|\na,d) = \sigma \Big( \ln \frac{p(\na|\nc)p(d|\nc)}{p(\na|c)p(d|c)} \Big) = \sigma \Big( \ln \frac{p(a|c)p(\nd|c)}{p(a|\nc)p(\nd|\nc)} \Big) = p(c|a,\nd).
\end{equation}
Then,
\begin{equation}\label{eq:foo3}
p(c|a,d) p(d) + p(c|a,\nd) p(\nd) = p(\nc|\na,d) p(d) + p(\nc|\na,\nd) p(\nd)
\end{equation}
because
\begin{equation}\label{eq:pd}
p(d)=p(d|c)p(c)+p(d|\nc)p(\nc)=p(d|c)p(c)+p(\nd|c)p(\nc)=0.5.
\end{equation}
Moreover,
\begin{equation}\label{eq:foo4}
p(c|\na,d) p(d) + p(c|\na,\nd) p(\nd) = 1 - ( p(\nc|\na,d) p(d) + p(\nc|\na,\nd) p(\nd) ).
\end{equation}

Finally, Equations \ref{eq:foo1} and \ref{eq:foo3} allow us to write $p(c|a,d) p(d) + p(c|a,\nd) p(\nd) = p(\nc|\na,d) p(d) + p(\nc|\na,\nd) p(\nd) = 0.5 + \alpha$ with $\alpha \geq 0$, whereas Equations \ref{eq:foo2}, \ref{eq:foo3} and \ref{eq:foo4} allow us to write $p(\nc|a,d) p(d) + p(\nc|a,\nd) p(\nd) = p(c|\na,d) p(d) + p(c|\na,\nd) p(\nd) = 0.5 - \alpha$. Therefore,
\begin{align*}
RD_{obs} &= E[Y|a,d] p(d) + E[Y|a,\nd] p(\nd) - E[Y|\na,d] p(d) - E[Y|\na,\nd] p(\nd)\\
&= \big( E[Y|a,c,d]p(c|a,d) + E[Y|a,\nc,d]p(\nc|a,d) \big) p(d)\\
&+ \big( E[Y|a,c,\nd]p(c|a,\nd) + E[Y|a,\nc,\nd]p(\nc|a,\nd) \big) p(\nd)\\
&- \big( E[Y|\na,c,d]p(c|\na,d) + E[Y|\na,\nc,d]p(\nc|\na,d) \big) p(d)\\
&- \big( E[Y|\na,c,\nd]p(c|\na,\nd) + E[Y|\na,\nc,\nd]p(\nc|\na,\nd) \big) p(\nd)\\
& = E[Y|a,c] (p(c|a,d) p(d) + p(c|a,\nd) p(\nd))\\
& + E[Y|a,\nc] (p(\nc|a,d) p(d) + p(\nc|a,\nd) p(\nd))\\
& - E[Y|\na,c] (p(c|\na,d) p(d) + p(c|\na,\nd) p(\nd))\\
& - E[Y|\na,\nc] (p(\nc|\na,d) p(d) + p(\nc|\na,\nd) p(\nd))\\
& = E[Y|a,c] (0.5 + \alpha) + E[Y|a,\nc] (0.5 - \alpha)\\
& - E[Y|\na,c] (0.5 - \alpha) - E[Y|\na,\nc] (0.5 + \alpha)
\end{align*}
where the third equality follows from the fact that $Y$ and $D$ are conditionally independent given $A$ and $C$ due to the causal graph under consideration. Then,
\begin{equation}\label{eq:obstrue}
RD_{obs} = RD_{true} + \alpha \big( E[Y|a,c] - E[Y|a,\nc] + E[Y|\na,c] - E[Y|\na,\nc] \big)
\end{equation}
with $\alpha \geq 0$.

We continue by establishing a similar relationship between $RD_{obs}$ and $RD_{crude}$. First, note that
\begin{align}\nonumber\label{eq:x}
p(a|d)&=p(a|c,d)p(c|d)+p(a|\nc,d)p(\nc|d)\\
&=p(a|c)p(c|d)+p(a|\nc)p(\nc|d)\\\nonumber
&=p(\na|\nc)p(\nc|\nd)+p(\na|c)p(c|\nd)\\\nonumber
&=p(\na|\nc,\nd)p(\nc|\nd)+p(\na|c,\nd)p(c|\nd)\\\nonumber
&=p(\na|\nd)
\end{align}
by the fact that $A$ and $D$ are conditionally independent given $C$ due to the causal graph under consideration, the assumption that $p(a|c)=p(\na|\nc)$, and the fact that $p(c|d)=p(d|c)=p(\nd|\nc)=p(\nc|\nd)$ which follows from the assumptions that $p(c)=0.5$ and $p(d|c)=p(\nd|\nc)$ and the fact that $p(d)=0.5$ as shown in Equation \ref{eq:pd}. Likewise,
\begin{align}\nonumber\label{eq:y}
p(a|\nd)&=p(a|c,\nd)p(c|\nd)+p(a|\nc,\nd)p(\nc|\nd)\\
&=p(a|c)p(c|\nd)+p(a|\nc)p(\nc|\nd)\\\nonumber
&=p(\na|\nc)p(\nc|d)+p(\na|c)p(c|d)\\\nonumber
&=p(\na|\nc,d)p(\nc|d)+p(\na|c,d)p(c|d)\\\nonumber
&=p(\na|d).
\end{align}
Next, let $x=p(a|c)$ and $z=p(d|c)$. Recall that $p(a|c)=p(\na|\nc)$ by assumption, and $p(c|d)=p(d|c)=p(\nd|\nc)=p(\nc|\nd)$ as shown above. Then, Equation \ref{eq:y} can be rewritten as
\[
p(a|\nd)=x(1-z)+(1-x)z=-2xz+x+z.
\]
Recall that $z \geq 0.5$ by assumption. If $z=0.5$, then $-2xz+x+z=0.5$. On the other hand, if $z>0.5$ then $-2xz+x+z \leq 0.5$. To see it, assume to the contrary that
\begin{align*}
-2xz+x+z &>0.5\\
x(1-2z) &>0.5-z=(1-2z)/2\\
x &<0.5
\end{align*}
which contradicts the assumption that $x \geq 0.5$. Consequently, $p(a|\nd)=p(\na|d) \leq 0.5$ and $p(a|d)=p(\na|\nd) \geq 0.5$ by Equations \ref{eq:x} and \ref{eq:y}.

Finally, note that
\[
p(a)=p(a|c)p(c)+p(a|\nc)p(\nc)=p(a|c)p(c)+p(\na|c)p(\nc)=0.5
\]
by the assumptions that $p(a|c)=p(\na|\nc)$ and $p(c)=0.5$. This together with the fact that $p(d)=0.5$ as shown in Equation \ref{eq:pd}, and the previous paragraph imply that $p(\nd|a)=p(d|\na) \leq 0.5$ and $p(d|a)=p(\nd|\na) \geq 0.5$. Let $p(d|a)=p(\nd|\na)=0.5+\beta$ and $p(\nd|a)=p(d|\na)=0.5-\beta$ with $\beta \geq 0$. Then,
\begin{align}\nonumber
RD_{crude} &= E[Y|a] - E[Y|\na]\\\nonumber\label{eq:obscrude}
&= E[Y|a,d] p(d|a) + E[Y|a,\nd] p(\nd|a)\\\nonumber
&- E[Y|\na,d] p(d|\na) - E[Y|\na,\nd] p(\nd|\na)\\\nonumber
&= E[Y|a,d] (0.5 + \beta) + E[Y|a,\nd] (0.5 - \beta)\\\nonumber
&- E[Y|\na,d] (0.5 - \beta) - E[Y|\na,\nd] (0.5 + \beta)\\
&= RD_{obs} + \beta \big( E[Y|a,d] - E[Y|a,\nd] + E[Y|\na,d] - E[Y|\na,\nd] \big)
\end{align}
with $\beta \geq 0$, where the last equality follows from the fact that $p(d)=0.5$ by Equation \ref{eq:pd}.

Now, note that
\begin{align}\nonumber
E[Y|a,d] - E[Y|a,\nd] &= E[Y|a,c,d] p(c|a,d) + E[Y|a,\nc,d] p(\nc|a,d)\\\nonumber\label{eq:kk}
&- E[Y|a,c,\nd] p(c|a,\nd) - E[Y|a,\nc,\nd] p(\nc|a,\nd)\\\nonumber
&= E[Y|a,c] (p(c|a,d) - p(c|a,\nd))\\\nonumber
&- E[Y|a,\nc] (p(\nc|a,\nd) - p(\nc|a,d))\\\nonumber
&= E[Y|a,c] (p(c|a,d) - p(c|a,\nd))\\\nonumber
&- E[Y|a,\nc] (1 - p(c|a,\nd) - 1 + p(c|a,d))\\
&= \big( E[Y|a,c] - E[Y|a,\nc] \big) (p(c|a,d) - p(c|a,\nd))
\end{align}
where the second equality follows from the fact that $Y$ and $D$ are conditionally independent given $A$ and $C$ due to the causal graph under consideration. Likewise,
\begin{align}\nonumber
E[Y|\na,\nd] - E[Y|\na,d] &= E[Y|\na,c,\nd] p(c|\na,\nd) + E[Y|\na,\nc,\nd] p(\nc|\na,\nd)\\\nonumber\label{eq:kk2}
&- E[Y|\na,c,d] p(c|\na,d) - E[Y|\na,\nc,d] p(\nc|\na,d)\\\nonumber
&= E[Y|\na,\nc] (p(\nc|\na,\nd) - p(\nc|\na,d))\\\nonumber
&- E[Y|\na,c] (p(c|\na,d) - p(c|\na,\nd))\\\nonumber
&= E[Y|\na,\nc] (p(\nc|\na,\nd) - p(\nc|\na,d))\\\nonumber
&- E[Y|\na,c] (1 - p(\nc|\na,d) - 1 + p(\nc|\na,\nd))\\\nonumber
&= \big( E[Y|\na,\nc] - E[Y|\na,c] \big) (p(\nc|\na,\nd) - p(\nc|\na,d))\\
&= \big( E[Y|\na,\nc] - E[Y|\na,c] \big) (p(c|a,d) - p(c|a,\nd))
\end{align}
where the last equality follows from Equations \ref{eq:equality1} and \ref{eq:equality2}. Moreover, Equations \ref{eq:equality1} and \ref{eq:equality2} also imply that $p(c|a,d) \geq p(c|a,\nd)$, because $p(d|c)=p(\nd|\nc) \geq 0.5$ by assumption. Then,
\begin{align}\nonumber\label{eq:sign}
&sign\big(E[Y|a,c] - E[Y|a,\nc] + E[Y|\na,c] - E[Y|\na,\nc]\big) =\\
&sign\big(E[Y|a,d] - E[Y|a,\nd] + E[Y|\na,d] - E[Y|\na,\nd]\big).
\end{align}
This equation together with Equations \ref{eq:obstrue} and \ref{eq:obscrude} imply the desired result.
\end{proof}

\begin{theorem}
Consider the causal graph to the left in Figure \ref{fig:graphs}. Let $p(c)=0.5$, $p(a|c)=p(\na|\nc) \leq 0.5$ and $p(d|c)=p(\nd|\nc) \leq 0.5$. Then, $RD_{obs}$ lies between $RD_{true}$ and $RD_{crude}$.
\end{theorem}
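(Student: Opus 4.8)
The plan is to obtain this theorem from Theorem~\ref{the:nonmonotone} by relabeling the two states of the unobserved confounder, so that no new computation is needed. Let $C'$ be the binary variable obtained from $C$ by swapping its states, i.e.\ $c'$ takes the role of $\nc$ and $\nc'$ the role of $c$. Relabeling the states of a vertex leaves the causal graph untouched, so the model with $C'$ in place of $C$ still obeys the factorization in Equation~\ref{eq:factorization}, still has the proxy non-differentially mismeasured, and still has the confounder unobserved; hence Theorem~\ref{the:nonmonotone} applies to it once we check that its hypotheses are met.

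First I would verify that the hypotheses transform correctly. Clearly $p(c')=p(\nc)=0.5$. From $p(a|c)=p(\na|\nc)$ we get $p(a|\nc)=1-p(\na|\nc)=1-p(a|c)\geq 0.5$, using $p(a|c)\leq 0.5$; therefore $p(a|c')=p(a|\nc)=1-p(a|c)$ equals $p(\na|\nc')=p(\na|c)=1-p(a|c)$, and both are $\geq 0.5$. The identical computation with $D$ in place of $A$ gives $p(d|c')=p(\nd|\nc')\geq 0.5$. Thus the $C'$-model satisfies exactly the hypotheses of Theorem~\ref{the:nonmonotone}.

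Next I would note that $RD_{crude}$ and $RD_{obs}$ do not involve $C$ at all, while $RD_{true}=\sum_{c}\big(E[Y|a,c]-E[Y|\na,c]\big)p(c)$ is a sum over the states of $C$ and is therefore unchanged when $C$ is relabeled. Consequently the three risk differences of the $C'$-model coincide with those of the original model, and applying Theorem~\ref{the:nonmonotone} to the $C'$-model yields precisely the claim that, in the original model, $RD_{obs}$ lies between $RD_{true}$ and $RD_{crude}$.

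I do not expect a real obstacle; the only delicate point is the bookkeeping in the hypothesis transformation, in particular making sure the equalities $p(a|c)=p(\na|\nc)$ and $p(d|c)=p(\nd|\nc)$ are preserved and are not weakened into the inequalities that Section~\ref{sec:nonmonotone2} warns about. An alternative is to repeat the proof of Theorem~\ref{the:nonmonotone} verbatim with the inequalities reversed: in that route the constant $\alpha$ in Equation~\ref{eq:obstrue} becomes nonpositive and the difference $p(c|a,d)-p(c|a,\nd)$ flips sign (so that Equation~\ref{eq:sign} becomes an opposite-sign relation), whereas $\beta$ in Equation~\ref{eq:obscrude} remains nonnegative; one then checks that these two sign reversals combine so that $RD_{obs}-RD_{true}$ and $RD_{crude}-RD_{obs}$ still share a sign, which is exactly the betweenness claim. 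The relabeling argument is the cleaner of the two, and I would present it.
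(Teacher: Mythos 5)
Your primary argument is correct, and it takes a genuinely different route from the paper. The paper proves this theorem by re-running the proof of Theorem~\ref{the:nonmonotone} and tracking the sign reversals: the constant in Equation~\ref{eq:obstrue} becomes $\alpha\leq 0$, the constant in Equation~\ref{eq:obscrude} stays $\beta\geq 0$, and Equation~\ref{eq:sign} acquires a minus sign because $p(c|a,d)\leq p(c|a,\nd)$ now; the two reversals cancel and betweenness follows. That is exactly the ``alternative'' you sketch at the end, and your sign bookkeeping for it is accurate. Your preferred route --- relabeling the states of $C$ --- is sound: the factorization and conditional independences are label-invariant, $RD_{crude}$ and $RD_{obs}$ do not mention $C$, $RD_{true}$ is a symmetric sum over the states of $C$ (and is in any case defined counterfactually), and your check that the hypotheses transform into $p(a|c')=p(\na|\nc')=1-p(a|c)\geq 0.5$ (and likewise for $D$) preserves the equalities rather than degrading them to the inequalities of Section~\ref{sec:nonmonotone2}. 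What the relabeling buys is a two-line reduction with no new computation and an explicit explanation of \emph{why} Theorems~\ref{the:nonmonotone} and~\ref{the:nonmonotone2} are twins; what it does not buy is the directional information, since swapping $c\leftrightarrow\nc$ negates $E[Y|a,c]-E[Y|a,\nc]$, which is precisely why the conclusions of Corollary~\ref{cor:nonmonotone2b} are flipped relative to Corollary~\ref{cor:nonmonotoneb} --- the paper's computational route delivers Equations~\ref{eq:obstrue2}--\ref{eq:sign2} as reusable ingredients for those corollaries, whereas yours would need a small supplementary observation about how the hypotheses on $E[Y|A,C]$ transform under the relabeling.
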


\begin{proof}
Similar to the proof of Theorem \ref{the:nonmonotone}. Specifically, the assumption that $p(a|c) = p(\na|\nc) \leq 0.5$ implies that $p(a|c) / p(a|\nc) \leq 1$, which implies that $p(c|a,d) \leq p(c|d)$ and $p(c|a,\nd) \leq p(c|\nd)$, which implies that $p(c|a,d) p(d) + p(c|a,\nd) p(\nd) \leq 0.5$, which implies that
\begin{equation}\label{eq:obstrue2}
RD_{obs} = RD_{true} + \alpha \big( E[Y|a,c] - E[Y|a,\nc] + E[Y|\na,c] - E[Y|\na,\nc] \big)
\end{equation}
with $\alpha \leq 0$. Likewise, the assumption that $p(d|c) = p(\nd|\nc) \leq 0.5$ implies that $p(a|\nd)=p(\na|d) \leq 0.5$ and $p(a|d)=p(\na|\nd) \geq 0.5$, which implies that $p(\nd|a)=p(d|\na) \leq 0.5$ and $p(d|a)=p(\nd|\na) \geq 0.5$, which implies that
\begin{equation}\label{eq:obscrude2}
RD_{crude} = RD_{obs} + \beta \big( E[Y|a,d] - E[Y|a,\nd] + E[Y|\na,d] - E[Y|\na,\nd] \big)
\end{equation}
with $\beta \geq 0$. The assumption that $p(d|c) = p(\nd|\nc) \leq 0.5$ also implies that $p(c|a,d) \leq p(c|a,\nd)$, which implies that
\begin{align}\nonumber\label{eq:sign2}
&sign\big(E[Y|a,c] - E[Y|a,\nc] + E[Y|\na,c] - E[Y|\na,\nc]\big) =\\
&- sign\big(E[Y|a,d] - E[Y|a,\nd] + E[Y|\na,d] - E[Y|\na,\nd]\big).
\end{align}
This equation together with Equations \ref{eq:obstrue2} and \ref{eq:obscrude2} imply the desired result.
\end{proof}

\begin{corollary}
Consider the causal graph to the left in Figure \ref{fig:graphs}. Let $p(c)=0.5$, $p(a|c)=p(\na|\nc) \geq 0.5$ and $p(d|c)=p(\nd|\nc) \geq 0.5$. If $E[Y|a,c] - E[Y|a,\nc] \geq E[Y|\na,\nc] - E[Y|\na,c] \geq 0$, then $RD_{crude} \geq RD_{obs} \geq RD_{true}$. If $E[Y|a,c] - E[Y|a,\nc] \leq E[Y|\na,\nc] - E[Y|\na,c] \leq 0$, then $RD_{crude} \leq RD_{obs} \leq RD_{true}$.
\end{corollary}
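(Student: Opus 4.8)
The plan is to derive the corollary directly from the machinery already assembled in the proof of Theorem~\ref{the:nonmonotone}. That proof established, under precisely the present hypotheses, two exact representations: Equation~\ref{eq:obstrue},
\[
RD_{obs} = RD_{true} + \alpha\big(E[Y|a,c] - E[Y|a,\nc] + E[Y|\na,c] - E[Y|\na,\nc]\big), \quad \alpha \geq 0,
\]
and Equation~\ref{eq:obscrude},
\[
RD_{crude} = RD_{obs} + \beta\big(E[Y|a,d] - E[Y|a,\nd] + E[Y|\na,d] - E[Y|\na,\nd]\big), \quad \beta \geq 0,
\]
together with the sign identity~\ref{eq:sign}, which says that the parenthesized factor in the first display has the same sign as the one in the second. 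Write $S := E[Y|a,c] - E[Y|a,\nc] + E[Y|\na,c] - E[Y|\na,\nc]$ for the common quantity. Then both $RD_{obs} - RD_{true}$ and $RD_{crude} - RD_{obs}$ equal a nonnegative multiple of something having the sign of $S$, so the whole argument reduces to determining the sign of $S$ under each hypothesis.

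Under $E[Y|a,c] - E[Y|a,\nc] \geq E[Y|\na,\nc] - E[Y|\na,c] \geq 0$, rearranging the first inequality gives exactly $S \geq 0$. Hence $RD_{obs} - RD_{true} = \alpha S \geq 0$, and by~\ref{eq:sign} the increment $RD_{crude} - RD_{obs}$ is likewise $\geq 0$; chaining the two yields $RD_{crude} \geq RD_{obs} \geq RD_{true}$. Under $E[Y|a,c] - E[Y|a,\nc] \leq E[Y|\na,\nc] - E[Y|\na,c] \leq 0$ the argument is symmetric: the first inequality now rearranges to $S \leq 0$, so both increments are $\leq 0$ and $RD_{crude} \leq RD_{obs} \leq RD_{true}$.

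I do not expect a genuine obstacle, since all the analytic work was done in proving Theorem~\ref{the:nonmonotone} and the corollary is a matter of sign bookkeeping. Two small points deserve care. First, the betweenness conclusion uses only the comparison between the two contrasts $E[Y|a,c] - E[Y|a,\nc]$ and $E[Y|\na,\nc] - E[Y|\na,c]$, not their individual signs, so the extra ``$\geq 0$'' (resp.\ ``$\leq 0$'') clause is inherited from Theorem~\ref{the:Penna2020} and is harmless but not actually needed here. Second, it is worth rechecking the orientation in~\ref{eq:sign}: it is an equality of signs rather than an opposition, because $p(c|a,d) \geq p(c|a,\nd)$ under the assumption $p(d|c) = p(\nd|\nc) \geq 0.5$ (as shown en route to~\ref{eq:sign}), and this is exactly what makes both increments point the same way and hence places $RD_{obs}$ between $RD_{true}$ and $RD_{crude}$ in the stated order.
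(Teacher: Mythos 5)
Your proposal is correct and follows essentially the same route as the paper, whose proof of this corollary is precisely the observation that it follows from Equations \ref{eq:obstrue}, \ref{eq:obscrude} and \ref{eq:sign}; you have merely written out the sign bookkeeping explicitly. Your side remark that the trailing ``$\geq 0$'' (resp.\ ``$\leq 0$'') in the hypothesis is not needed for the betweenness conclusion is also accurate.
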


\begin{proof}
It follows from Equations \ref{eq:obstrue}, \ref{eq:obscrude} and \ref{eq:sign}.
\end{proof}

\begin{corollary}
Consider the causal graph to the left in Figure \ref{fig:graphs}. Let $p(c)=0.5$, $p(a|c)=p(\na|\nc) \leq 0.5$ and $p(d|c)=p(\nd|\nc) \leq 0.5$. If $E[Y|a,c] - E[Y|a,\nc] \geq E[Y|\na,\nc] - E[Y|\na,c] \geq 0$, then $RD_{crude} \leq RD_{obs} \leq RD_{true}$. If $E[Y|a,c] - E[Y|a,\nc] \leq E[Y|\na,\nc] - E[Y|\na,c] \leq 0$, then $RD_{crude} \geq RD_{obs} \geq RD_{true}$.
\end{corollary}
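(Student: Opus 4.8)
The plan is to obtain this corollary as an immediate consequence of the three identities already produced in the proof of Theorem~\ref{the:nonmonotone2}, in exact parallel with the way Corollary~\ref{cor:nonmonotoneb} was read off from the proof of Theorem~\ref{the:nonmonotone}. Recall that under the hypotheses of Theorem~\ref{the:nonmonotone2} ($p(c)=0.5$, $p(a|c)=p(\na|\nc)\leq 0.5$, $p(d|c)=p(\nd|\nc)\leq 0.5$), that proof establishes Equation~\ref{eq:obstrue2}, namely $RD_{obs}=RD_{true}+\alpha\big(E[Y|a,c]-E[Y|a,\nc]+E[Y|\na,c]-E[Y|\na,\nc]\big)$ with $\alpha\leq 0$; Equation~\ref{eq:obscrude2}, namely $RD_{crude}=RD_{obs}+\beta\big(E[Y|a,d]-E[Y|a,\nd]+E[Y|\na,d]-E[Y|\na,\nd]\big)$ with $\beta\geq 0$; and Equation~\ref{eq:sign2}, which says the two bracketed sums have opposite signs. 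Write $S_C$ and $S_D$ for these two sums.

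First I would translate the hypothesis into a statement about $S_C$. If $E[Y|a,c]-E[Y|a,\nc]\geq E[Y|\na,\nc]-E[Y|\na,c]\geq 0$, then $E[Y|a,c]-E[Y|a,\nc]\geq 0$, while $E[Y|\na,c]-E[Y|\na,\nc]\leq 0$, and the displayed inequality rearranges to $E[Y|a,c]-E[Y|a,\nc]\geq -(E[Y|\na,c]-E[Y|\na,\nc])$; adding the two summands gives $S_C\geq 0$. Then Equation~\ref{eq:obstrue2} together with $\alpha\leq 0$ yields $RD_{obs}\leq RD_{true}$, Equation~\ref{eq:sign2} yields $S_D\leq 0$, and Equation~\ref{eq:obscrude2} together with $\beta\geq 0$ yields $RD_{crude}\leq RD_{obs}$; chaining gives $RD_{crude}\leq RD_{obs}\leq RD_{true}$. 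The second claim is the mirror image: the hypothesis $E[Y|a,c]-E[Y|a,\nc]\leq E[Y|\na,\nc]-E[Y|\na,c]\leq 0$ forces $S_C\leq 0$, hence $RD_{obs}\geq RD_{true}$, hence (via $S_D\geq 0$) $RD_{crude}\geq RD_{obs}$.

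There is no genuinely hard step once Theorem~\ref{the:nonmonotone2} is available; the one point that requires care is the sign bookkeeping, and specifically that Equation~\ref{eq:sign2} carries a minus sign (coming from $p(c|a,d)\leq p(c|a,\nd)$ when $p(d|c)\leq 0.5$), which is precisely why this corollary reverses the ordering of $RD_{crude}$, $RD_{obs}$, $RD_{true}$ relative to Corollary~\ref{cor:nonmonotoneb} despite having the identical hypotheses on $E[Y|A,C]$. Accordingly I would give the proof in a single line, citing Equations~\ref{eq:obstrue2}, \ref{eq:obscrude2} and \ref{eq:sign2}, just as the proof of Corollary~\ref{cor:nonmonotoneb} cites Equations~\ref{eq:obstrue}, \ref{eq:obscrude} and \ref{eq:sign}.
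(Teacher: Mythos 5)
Your proposal is correct and is essentially identical to the paper's own proof, which likewise derives the corollary directly from Equations~\ref{eq:obstrue2}, \ref{eq:obscrude2} and \ref{eq:sign2}; your sign bookkeeping ($S_C \geq 0$, hence $RD_{obs} \leq RD_{true}$ via $\alpha \leq 0$, hence $S_D \leq 0$ and $RD_{crude} \leq RD_{obs}$ via $\beta \geq 0$) is exactly right. The paper states this in one line without spelling out the arithmetic, so your version is simply a more explicit rendering of the same argument.
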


\begin{proof}
It follows from Equations \ref{eq:obstrue2}, \ref{eq:obscrude2} and \ref{eq:sign2}.
\end{proof}

\begin{corollary}
Under the conditions in Corollary \ref{cor:nonmonotoneb}, $E[Y|a,c] - E[Y|a,\nc] \geq E[Y|\na,\nc] - E[Y|\na,c] \geq 0$ if and only if $E[Y|a,d] - E[Y|a,\nd] \geq E[Y|\na,\nd] - E[Y|\na,d] \geq 0$. Likewise when replacing $\geq$ with $\leq$.
\end{corollary}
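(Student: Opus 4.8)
The plan is to recycle two identities already established in the proof of Theorem~\ref{the:nonmonotone}. Equations~\ref{eq:kk} and~\ref{eq:kk2} there show that, under the present assumptions,
\[
E[Y|a,d] - E[Y|a,\nd] = k\,\big(E[Y|a,c] - E[Y|a,\nc]\big), \qquad E[Y|\na,\nd] - E[Y|\na,d] = k\,\big(E[Y|\na,\nc] - E[Y|\na,c]\big),
\]
where $k = p(c|a,d) - p(c|a,\nd)$, and that $k \geq 0$ because $p(d|c)=p(\nd|\nc)\geq 0.5$. So the whole corollary reduces to the observation that multiplying a pair of real numbers by the same positive constant preserves both their order and their signs.

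Concretely, I would set $A = E[Y|a,d] - E[Y|a,\nd]$, $B = E[Y|\na,\nd] - E[Y|\na,d]$, $A' = E[Y|a,c] - E[Y|a,\nc]$ and $B' = E[Y|\na,\nc] - E[Y|\na,c]$, so that $A = kA'$ and $B = kB'$. For $k>0$ one then has $A \geq B \iff A' \geq B'$ and $B \geq 0 \iff B' \geq 0$, hence $A \geq B \geq 0 \iff A' \geq B' \geq 0$, which is the first claim; flipping every inequality (again legitimate since $k>0$) gives the ``$\leq$'' version. The bookkeeping between the $D$-side contrasts and the $C$-side contrasts is exactly what the identities above encode, so no further algebra is needed.

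There is essentially no hard step here: the two factorization identities and the sign $k\geq 0$ were the substantive content, and they are inherited verbatim from the proof of Theorem~\ref{the:nonmonotone}. The only point requiring a word of care is the boundary case $k=0$, i.e.\ $p(d|c)=0.5$, in which $D$ carries no information about $C$ and the ``if and only if'' degenerates on the $D$-side; since that case corresponds to a useless proxy, it is natural to read the statement with $k>0$, and then the equivalence is immediate.
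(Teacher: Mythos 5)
Your proof is correct and follows essentially the same route as the paper, which likewise derives the corollary from Equations~\ref{eq:kk} and~\ref{eq:kk2} together with the fact that $p(c|a,d) \geq p(c|a,\nd)$ established in the proof of Theorem~\ref{the:nonmonotone}. Your remark about the degenerate boundary case $p(c|a,d) = p(c|a,\nd)$ (i.e.\ $p(d|c)=0.5$), where the ``only if'' direction breaks down, is a legitimate caveat that applies equally to the paper's own one-line proof.
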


\begin{proof}
It follows from Equations \ref{eq:kk} and \ref{eq:kk2}. Recall that $p(c|a,d) \geq p(c|a,\nd)$ was established in the proof of Theorem \ref{the:nonmonotone}.
\end{proof}

\begin{corollary}
Under the conditions in Corollary \ref{cor:nonmonotone2b}, $E[Y|a,c] - E[Y|a,\nc] \geq E[Y|\na,\nc] - E[Y|\na,c] \geq 0$ if and only if $E[Y|a,d] - E[Y|a,\nd] \leq E[Y|\na,\nd] - E[Y|\na,d] \leq 0$. Likewise when swapping $\leq$ and $\geq$.
\end{corollary}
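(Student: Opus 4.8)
The plan is to follow the proof of Corollary \ref{cor:nonmonotonec} almost verbatim, the only change being that one inequality gets reversed. First I would note that Equations \ref{eq:kk} and \ref{eq:kk2} from the proof of Theorem \ref{the:nonmonotone} remain valid under the hypotheses of Corollary \ref{cor:nonmonotone2b}: their derivation uses only that $Y \ci D \mid A, C$ in the causal graph together with Equations \ref{eq:equality1} and \ref{eq:equality2}, and the latter require merely $p(a|c)=p(\na|\nc)$ and $p(d|c)=p(\nd|\nc)$, with no restriction on where these probabilities sit relative to $0.5$. So we still have
\[
E[Y|a,d] - E[Y|a,\nd] = \big(E[Y|a,c] - E[Y|a,\nc]\big)\big(p(c|a,d) - p(c|a,\nd)\big),
\]
\[
E[Y|\na,\nd] - E[Y|\na,d] = \big(E[Y|\na,\nc] - E[Y|\na,c]\big)\big(p(c|a,d) - p(c|a,\nd)\big).
\]

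The one genuinely new ingredient is the sign of $p(c|a,d) - p(c|a,\nd)$. In the proof of Theorem \ref{the:nonmonotone} this difference was shown to be nonnegative precisely because $p(d|c)=p(\nd|\nc) \geq 0.5$; under the present assumption $p(d|c)=p(\nd|\nc) \leq 0.5$, the identical computation from Equations \ref{eq:equality1}--\ref{eq:equality2} (equivalently from \ref{eq:Bishop}--\ref{eq:Bishop2}), using that $\sigma$ and $\ln$ are increasing and that $z = p(d|c) \leq 0.5$ forces $z/(1-z) \leq (1-z)/z$, gives instead $p(c|a,d) - p(c|a,\nd) \leq 0$. Writing $\lambda = p(c|a,\nd) - p(c|a,d) \geq 0$, the two displays become $E[Y|a,d] - E[Y|a,\nd] = -\lambda\big(E[Y|a,c] - E[Y|a,\nc]\big)$ and $E[Y|\na,\nd] - E[Y|\na,d] = -\lambda\big(E[Y|\na,\nc] - E[Y|\na,c]\big)$.

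To finish, suppose first $\lambda > 0$, which happens exactly when $p(d|c) < 0.5$, i.e. when $D$ is informative about $C$. Multiplying the chain $E[Y|a,c] - E[Y|a,\nc] \geq E[Y|\na,\nc] - E[Y|\na,c] \geq 0$ through by $-\lambda$ reverses every inequality and yields exactly $E[Y|a,d] - E[Y|a,\nd] \leq E[Y|\na,\nd] - E[Y|\na,d] \leq 0$, and dividing the latter chain by $-\lambda$ recovers the former; this is the claimed equivalence. The ``likewise'' clause follows by the same argument with all inequalities reversed. In the boundary case $\lambda = 0$, i.e. $p(d|c)=0.5$, the proxy $D$ is independent of $C$ (indeed of $\{A,C,Y\}$), so $E[Y|A,d]=E[Y|A,\nd]=E[Y|A]$ and both sides of the equivalence collapse to the vacuous $0 \geq 0 \geq 0$; I would just remark that the statement holds trivially there (as it does, for the same reason, already in Corollary \ref{cor:nonmonotonec}). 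I do not expect any real obstacle: everything reduces to the single sign flip in $p(c|a,d) - p(c|a,\nd)$ together with the elementary observation that multiplying an inequality chain by a negative number reverses it; the only point needing a word of care is the $p(d|c)=0.5$ edge case, where the equivalence is vacuous rather than substantive.
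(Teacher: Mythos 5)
Your argument is essentially the paper's own proof, just written out in full: the paper's proof of this corollary is a two-line appeal to Equations \ref{eq:kk} and \ref{eq:kk2} together with the fact that $p(c|a,d) \leq p(c|a,\nd)$, which it notes was already established in the proof of Theorem \ref{the:nonmonotone2}. Your verification that Equations \ref{eq:kk} and \ref{eq:kk2} only need $p(a|c)=p(\na|\nc)$, $p(d|c)=p(\nd|\nc)$ and $Y\ci D\mid A,C$, your derivation of the sign flip $p(c|a,d)-p(c|a,\nd)\leq 0$ from $p(d|c)\leq 0.5$, and the multiplication of the inequality chain by $-\lambda$ are all exactly the intended route, and are correct when $\lambda>0$.

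The one point where you go wrong is the boundary case $\lambda=0$. You claim that there ``both sides of the equivalence collapse to the vacuous $0\geq 0\geq 0$,'' but only the $D$-side does: $E[Y|a,d]-E[Y|a,\nd]$ and $E[Y|\na,\nd]-E[Y|\na,d]$ both vanish, so the chain $E[Y|a,d] - E[Y|a,\nd] \leq E[Y|\na,\nd] - E[Y|\na,d] \leq 0$ is trivially true, whereas the $C$-side chain $E[Y|a,c] - E[Y|a,\nc] \geq E[Y|\na,\nc] - E[Y|\na,c] \geq 0$ involves quantities that are untouched by $\lambda$ and can perfectly well be false. So at $p(d|c)=p(\nd|\nc)=0.5$ the ``only if'' direction of the biconditional genuinely fails rather than holding trivially. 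To be fair, the hypotheses of Corollary \ref{cor:nonmonotone2b} permit this degenerate case and the paper's own proof is silent about it (it too only has the weak inequality $p(c|a,d)\leq p(c|a,\nd)$), so this is a defect you inherited rather than introduced; but your explicit assertion that the statement ``holds trivially there'' is incorrect and should be replaced either by excluding $p(d|c)=0.5$ or by flagging the counterexample.
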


\begin{proof}
It follows from Equations \ref{eq:kk} and \ref{eq:kk2}. Recall that $p(c|a,d) \leq p(c|a,\nd)$ was established in the proof of Theorem \ref{the:nonmonotone2}.
\end{proof}

\setcounter{theorem}{8}
\begin{theorem}
Consider the causal graph to the left in Figure \ref{fig:graphs}. Let $p(c)=0.5$, $p(\na|\nc) \geq p(a|c) \geq 0.5$ and $p(\nd|\nc) \geq p(d|c) \geq 0.5$. If $E[Y|a,c] - E[Y|a,\nc] \geq E[Y|\na,\nc] - E[Y|\na,c] \geq 0$, then $RD_{crude} \geq RD_{true}$ and $RD_{obs} \geq RD_{true}$. If $E[Y|a,c] - E[Y|a,\nc] \leq E[Y|\na,\nc] - E[Y|\na,c] \leq 0$, then $RD_{crude} \leq RD_{true}$ and $RD_{obs} \leq RD_{true}$.
\end{theorem}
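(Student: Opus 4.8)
The plan is to factor the problem into a clean \emph{master identity} plus two weight inequalities: one trivial, and one that carries essentially all of the work.

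Write $P := E[Y|a,c] - E[Y|a,\nc]$ and $Q := E[Y|\na,\nc] - E[Y|\na,c]$, so that the first hypothesis reads $P \ge Q \ge 0$; the second hypothesis is the first applied to $-Y$, so I treat only the first. Expanding $E[Y|a]$, $E[Y|a,d]$, $E[Y|a,\nd]$, etc., over the strata of $C$ --- using $p(c)=\tfrac12$ and, for $RD_{obs}$, the conditional independence $Y \ci D \mid A,C$ entailed by the graph --- I would first establish
\begin{align*}
RD_{obs} - RD_{true} &= \big(w_1 - \tfrac12\big)P + \big(w_3 - \tfrac12\big)Q,\\
RD_{crude} - RD_{true} &= \big(p(c|a) - \tfrac12\big)P + \big(p(c|\na) - \tfrac12\big)Q,
\end{align*}
where $w_1 = p(c|a,d)p(d) + p(c|a,\nd)p(\nd)$ and $w_3 = p(c|\na,d)p(d) + p(c|\na,\nd)p(\nd)$. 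The key elementary identity is $(u-\tfrac12)P + (u'-\tfrac12)Q = (P-Q)(u-\tfrac12) + Q(u+u'-1)$: since $P \ge Q \ge 0$, this is nonnegative as soon as $u \ge \tfrac12$ and $u+u' \ge 1$. So it suffices to verify $u \ge \tfrac12$ and $u+u' \ge 1$ for the pair $(w_1,w_3)$ and for the pair $(p(c|a),p(c|\na))$.

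The crude pair is immediate. With $f := p(a|c) \ge \tfrac12$ and $g := p(a|\nc) = 1-p(\na|\nc) \le 1-f \le \tfrac12$, one has $p(c|a) = f/(f+g) \ge \tfrac12$ (because $f \ge g$) and, after clearing denominators, $p(c|a) + p(c|\na) - 1 = \dfrac{(f-g)(1-f-g)}{(f+g)(2-f-g)} \ge 0$ using $f \ge g$ and $f+g \le 1$; this is the $p(c)=\tfrac12$ instance of Theorem~\ref{the:nonmonotone5}, which supplies $RD_{crude} \ge RD_{true}$. For $w_1 \ge \tfrac12$ I would reuse the logistic-sigmoid bookkeeping from the proof of Theorem~\ref{the:nonmonotone}: with $p(c)=\tfrac12$ and $A \ci D \mid C$, $p(c|a,d) = \sigma\!\big(\ln\tfrac{p(a|c)}{p(a|\nc)} + \ln\tfrac{p(d|c)}{p(d|\nc)}\big) \ge \sigma\!\big(\ln\tfrac{p(d|c)}{p(d|\nc)}\big) = p(c|d)$ since $p(a|c)/p(a|\nc)\ge1$, and likewise $p(c|a,\nd) \ge p(c|\nd)$; averaging against $p(d)$ and $p(\nd)$ gives $w_1 \ge p(c|d)p(d) + p(c|\nd)p(\nd) = p(c) = \tfrac12$.

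The crux is $w_1 + w_3 \ge 1$, and this is where the asymmetry $p(\nd|\nc) \ge p(d|c)$ (rather than the equality used in Theorem~\ref{the:nonmonotone}) must be paid for. Put $r := p(d|c) \ge \tfrac12$, $q := p(d|\nc) = 1-p(\nd|\nc) \le 1-r \le \tfrac12$, $\mu := f(1-f) \ge \nu := g(1-g) \ge 0$, and $\lambda := f(1-g)+g(1-f)$; note $p(d) = \tfrac12(r+q)$, $p(\nd) = \tfrac12\big((1-r)+(1-q)\big)$, that $f+g\le1 \Leftrightarrow \mu\ge\nu$, and that $\mu+\nu+\lambda = (f+g)\big(2-(f+g)\big) \le 1$. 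A Bayes expansion of each posterior (with $p(c)=\tfrac12$, $A \ci D \mid C$) reduces $w_1+w_3-1 \ge 0$ to
\begin{align*}
&\frac{(r+q)\big(\mu r^2 - \nu q^2\big)}{\mu r^2 + \lambda rq + \nu q^2} +{}\\
&\qquad\frac{\big((1-r)+(1-q)\big)\big(\mu(1-r)^2 - \nu(1-q)^2\big)}{\mu(1-r)^2 + \lambda(1-r)(1-q) + \nu(1-q)^2} \;\ge\; 0 .
\end{align*}
The first summand is nonnegative because $\mu\ge\nu$ and $r\ge q$ force $\mu r^2 \ge \nu q^2$; the second summand can be negative --- its numerator now pits $\mu(1-r)^2$ against $\nu(1-q)^2$, and $1-q \ge 1-r$ pulls the other way --- so the content is to show the first summand dominates. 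I expect to do this by clearing the two strictly positive denominators and checking the resulting polynomial inequality in $(f,g,r,q)$ on the box $\{\,\tfrac12 \le f,\ g \le \tfrac12,\ f+g\le1,\ \tfrac12 \le r,\ q \le \tfrac12,\ r+q\le1\,\}$, using the parametrization above ($\mu\ge\nu$, $\mu+\nu+\lambda\le1$) to organize the cancellations and splitting on the sign of $\mu(1-r)^2-\nu(1-q)^2$. This polynomial verification is the one genuine obstacle; everything else is bookkeeping. Once $w_1+w_3\ge1$ is established, the master identity yields $RD_{obs}-RD_{true} = (P-Q)(w_1-\tfrac12) + Q(w_1+w_3-1) \ge 0$, the parallel crude computation yields $RD_{crude}-RD_{true}\ge0$, and applying everything to $-Y$ gives the second (``$\le$'') half of the statement.
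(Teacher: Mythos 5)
Your overall architecture is correct and in fact coincides with the paper's: the paper writes $p(c|a,d)p(d)+p(c|a,\nd)p(\nd)=0.5+\alpha$ and $p(\nc|\na,d)p(d)+p(\nc|\na,\nd)p(\nd)=0.5+\beta$ and needs exactly $\alpha\geq 0$ and $\alpha\geq\beta$, which are your $w_1\geq\tfrac12$ and $w_1+w_3\geq 1$ (note $w_3=1-(0.5+\beta)$, so $w_1+w_3-1=\alpha-\beta$). Your master identity, the crude half via $p(c|a)+p(c|\na)-1=\frac{(f-g)(1-f-g)}{(f+g)(2-f-g)}$, the sigmoid argument for $w_1\geq\tfrac12$, and the reduction of $w_1+w_3-1\geq 0$ to the two-fraction expression in $(\mu,\nu,\lambda,r,q)$ are all correct, and the last of these is actually more explicit than anything in the paper.

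The genuine gap is that you never prove the one inequality that carries all the content, namely $w_1+w_3\geq 1$. You reduce it to a polynomial inequality on a box and write that you ``expect'' to verify it by clearing denominators and splitting on the sign of $\mu(1-r)^2-\nu(1-q)^2$; that is a plan, not a proof, and it is not a routine one: the paper's author states explicitly that he failed to prove this very inequality by hand and instead certified it with \texttt{FindInstance} in Mathematica (an exact, symbolic emptiness check of the reverse inequality under the hypotheses). Your case split handles the easy branch ($\mu(1-r)^2\geq\nu(1-q)^2$, where both summands are nonnegative), but in the other branch you must show the first summand dominates a negative second summand, and nothing in your write-up indicates how the resulting degree-$\geq 6$ polynomial inequality in $(f,g,r,q)$ is to be closed. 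Until that branch is actually carried out (or replaced by a certified computer-algebra verification as in the paper), the proof of $RD_{obs}\geq RD_{true}$ is incomplete at precisely its critical step; everything else in your proposal, including the $RD_{crude}$ half and the reduction of the second statement to the first via $-Y$, is sound.
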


\begin{proof}
We start by proving the first result in the theorem, specifically that $RD_{crude} \geq RD_{true}$. Recall from the proof of Theorem \ref{the:nonmonotone} that
\[
p(c|a) = \sigma \Big( \ln \frac{p(a|c)p(c)}{p(a|\nc)p(\nc)} \Big) = \sigma \Big( \ln \frac{p(a|c)}{p(a|\nc)} \Big)
\]
where the second equality follows from the assumption that $p(c)=0.5$. Likewise, 
\[
p(\nc|\na) = \sigma \Big( \ln \frac{p(\na|\nc)}{p(\na|c)} \Big).
\]
Therefore, $p(c|a) \geq 0.5$ and $p(\nc|\na) \geq 0.5$ due to the assumption that $p(\na|\nc) \geq p(a|c) \geq 0.5$. Now, consider the function $f(x)=x(1-x)$. By inspecting the first and second derivatives, we can conclude that $f(x)$ has a single maximum at $x=0.5$, and that it is increasing in the interval $[0,0.5]$ and decreasing in the interval $[0.5,1]$. This implies that $f(p(a|c))=p(a|c)p(\na|c) \geq p(a|\nc)p(\na|\nc)=f(p(\na|\nc))$ due to the assumption that $p(\na|\nc) \geq p(a|c) \geq 0.5$. Then,
\begin{equation}\label{eq:kuku}
\frac{p(a|c)}{p(a|\nc)} \geq \frac{p(\na|\nc)}{p(\na|c)}
\end{equation}
which together with the fact that $\sigma()$ and $\ln()$ are increasing functions imply that $p(c|a) \geq p(\nc|\na)$.

The results in the previous paragraph allow us to write $p(c|a)=0.5+\alpha$ and $p(\nc|\na)=0.5+\beta$ with $\alpha \geq \beta \geq 0$. Therefore,
\begin{align}\nonumber
RD_{crude} &= E[Y|a] - E[Y|\na]\\\nonumber
&= E[Y|a,c] p(c|a) + E[Y|a,\nc] p(\nc|a)\\\nonumber
&- E[Y|\na,c] p(c|\na) - E[Y|\na,\nc] p(\nc|\na)\\\nonumber
&= E[Y|a,c] (0.5 + \alpha) + E[Y|a,\nc] (0.5 - \alpha)\\\nonumber
&- E[Y|\na,c] (0.5 - \beta) - E[Y|\na,\nc] (0.5 + \beta)\\\label{eq:rdtrue2}
&= RD_{true} + \alpha \big( E[Y|a,c] - E[Y|a,\nc] \big) - \beta \big( E[Y|\na,\nc] - E[Y|\na,c] \big)
\end{align}
which implies that $RD_{crude} \geq RD_{true}$ because $\alpha \geq \beta \geq 0$, and $E[Y|a,c] - E[Y|a,\nc] \geq E[Y|\na,\nc] - E[Y|\na,c] \geq 0$ by assumption.

We continue by proving that $RD_{obs} \geq RD_{true}$. First, recall Equations \ref{eq:Bishop}-\ref{eq:Bishop4}. Then, $p(c|a,d) \geq p(c|d)$ and $p(c|a,\nd) \geq p(c|\nd)$ because $\sigma()$ and $\ln()$ are increasing functions and $p(a|c) / p(a|\nc) \geq 1$ by the assumption that $p(\na|\nc) \geq p(a|c) \geq 0.5$. Then,
\begin{align*}
p(c|a,d) p(d) + p(c|a,\nd) p(\nd) &\geq p(c|d) p(d) + p(c|\nd) p(\nd)\\
&= \frac{p(d|c) p(c)}{p(d)} p(d) + \frac{p(\nd|c) p(c)}{p(\nd)} p(\nd) = 0.5
\end{align*}
by the assumption that $p(c)=0.5$. We can analogously prove that $p(\nc|\na,d) p(d) + p(\nc|\na,\nd) p(\nd) \geq 0.5$. Moreover, it also holds that
\[
p(c|a,d) p(d) + p(c|a,\nd) p(\nd) \geq p(\nc|\na,d) p(d) + p(\nc|\na,\nd) p(\nd).
\]
To prove this inequality and after failing to do it on our own, we resorted to the function \texttt{FindInstance} from \texttt{Mathematica 12.2.0}. Specifically, we used \texttt{FindInstance} to find an instance of the probabilities that satisfied the reverse of the inequality above subject to $p(c)=0.5$, $p(\na|\nc) \geq p(a|c) \geq 0.5$ and $p(\nd|\nc) \geq p(d|c) \geq 0.5$. Since no such instance was found, the inequality above must hold. It is worth mentioning that \texttt{FindInstance} works analytically and not numerically and, thus, its outcome is exact and correct.\footnote{Code available at \texttt{https://www.dropbox.com/s/i8pnmm9zz4pqmgp/inequality.nb?dl=0}.}

The results in the previous paragraph allow us to write $p(c|a,d) p(d) + p(c|a,\nd) p(\nd) = 0.5 + \alpha$ and $p(\nc|\na,d) p(d) + p(\nc|\na,\nd) p(\nd) = 0.5 + \beta$ with $\alpha \geq \beta \geq 0$. Consequently, $p(\nc|a,d) p(d) + p(\nc|a,\nd) p(\nd) = 1 - ( p(c|a,d) p(d) + p(c|a,\nd) p(\nd) ) = 0.5 - \alpha$, and $p(c|\na,d) p(d) + p(c|\na,\nd) p(\nd) = 1- ( p(\nc|\na,d) p(d) + p(\nc|\na,\nd) p(\nd) ) = 0.5 - \beta$. Therefore,
\begin{align}\nonumber
RD_{obs} &= E[Y|a,d] p(d) + E[Y|a,\nd] p(\nd) - E[Y|\na,d] p(d) - E[Y|\na,\nd] p(\nd)\\\nonumber
&= \big( E[Y|a,c,d]p(c|a,d) + E[Y|a,\nc,d]p(\nc|a,d) \big) p(d)\\\nonumber
&+ \big( E[Y|a,c,\nd]p(c|a,\nd) + E[Y|a,\nc,\nd]p(\nc|a,\nd) \big) p(\nd)\\\nonumber
&- \big( E[Y|\na,c,d]p(c|\na,d) + E[Y|\na,\nc,d]p(\nc|\na,d) \big) p(d)\\\nonumber
&- \big( E[Y|\na,c,\nd]p(c|\na,\nd) + E[Y|\na,\nc,\nd]p(\nc|\na,\nd) \big) p(\nd)\\\nonumber
& = E[Y|a,c] (p(c|a,d) p(d) + p(c|a,\nd) p(\nd))\\\nonumber
& + E[Y|a,\nc] (p(\nc|a,d) p(d) + p(\nc|a,\nd) p(\nd))\\\nonumber
& - E[Y|\na,c] (p(c|\na,d) p(d) + p(c|\na,\nd) p(\nd))\\\nonumber
& - E[Y|\na,\nc] (p(\nc|\na,d) p(d) + p(\nc|\na,\nd) p(\nd))\\\nonumber
& = E[Y|a,c] (0.5 + \alpha) + E[Y|a,\nc] (0.5 - \alpha)\\\nonumber
& - E[Y|\na,c] (0.5 - \beta) - E[Y|\na,\nc] (0.5 + \beta)\\\label{eq:alphabeta}
& = RD_{true} + \alpha \big( E[Y|a,c] - E[Y|a,\nc] \big) - \beta \big( E[Y|\na,\nc] - E[Y|\na,c] \big)
\end{align}
where the third equality follows from the fact that $Y$ and $D$ are conditionally independent given $A$ and $C$ due to the causal graph under consideration. Then, $RD_{obs} \geq RD_{true}$ because $\alpha \geq \beta \geq 0$, and $E[Y|a,c] - E[Y|a,\nc] \geq E[Y|\na,\nc] - E[Y|\na,c] \geq 0$ by assumption.

Finally, the second result in the theorem follows Equations \ref{eq:rdtrue2} and \ref{eq:alphabeta}.
\end{proof}

\begin{theorem}
Consider the causal graph to the left in Figure \ref{fig:graphs}. Let $p(c)=0.5$, $p(a|c) \leq p(\na|\nc) \leq 0.5$ and $p(d|c) \leq p(\nd|\nc) \leq 0.5$. If $E[Y|a,c] - E[Y|a,\nc] \geq E[Y|\na,\nc] - E[Y|\na,c] \geq 0$, then $RD_{crude} \leq RD_{true}$ and $RD_{obs} \leq RD_{true}$. If $E[Y|a,c] - E[Y|a,\nc] \leq E[Y|\na,\nc] - E[Y|\na,c] \leq 0$, then $RD_{crude} \geq RD_{true}$ and $RD_{obs} \geq RD_{true}$.
\end{theorem}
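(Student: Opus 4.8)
The plan is to obtain Theorem~\ref{the:nonmonotone4} from Theorem~\ref{the:nonmonotone3} by relabelling the two values of the unobserved confounder $C$, i.e.\ by swapping the roles of $c$ and $\nc$. Nothing in the setup privileges one labelling of the values of $C$ over the other: the causal graph, the factorization in Equation~\ref{eq:factorization}, and the conditional independence of $D$ from $A$ and $Y$ given $C$ are all invariant under such a swap, and so are $RD_{crude}$ and $RD_{obs}$ (which do not mention $C$ and only involve the $(A,D,Y)$ margin) and $RD_{true}$ (which equals $E[Y_a]-E[Y_{\na}]$, a property of the joint law; equivalently, relabelling only permutes the two summands in its defining formula). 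So I would pass to the ``primed'' instance defined by $p'(c)=p(\nc)$, $p'(a\mid c)=p(a\mid\nc)$, $p'(d\mid c)=p(d\mid\nc)$, $E'[Y\mid A,c]=E[Y\mid A,\nc]$ (and symmetrically with $c$ and $\nc$ interchanged), which has the same $RD_{crude}$, $RD_{obs}$ and $RD_{true}$ as the original.

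Next I would check that the primed instance satisfies the hypotheses of Theorem~\ref{the:nonmonotone3}. Since $p(\na\mid\nc)\le 0.5$ we get $p'(a\mid c)=p(a\mid\nc)=1-p(\na\mid\nc)\ge 0.5$, and since $p(a\mid c)\le p(\na\mid\nc)$ we get $p'(\na\mid\nc)=p(\na\mid c)=1-p(a\mid c)\ge 1-p(\na\mid\nc)=p'(a\mid c)$; hence $p'(\na\mid\nc)\ge p'(a\mid c)\ge 0.5$, and the identical computation with $d$ in place of $a$ gives $p'(\nd\mid\nc)\ge p'(d\mid c)\ge 0.5$. Moreover the antecedent $E[Y\mid a,c]-E[Y\mid a,\nc]\ge E[Y\mid\na,\nc]-E[Y\mid\na,c]\ge 0$ becomes, under the swap, $E'[Y\mid a,c]-E'[Y\mid a,\nc]\le E'[Y\mid\na,\nc]-E'[Y\mid\na,c]\le 0$, which is precisely the second antecedent of Theorem~\ref{the:nonmonotone3}. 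That theorem then gives $RD_{crude}\le RD_{true}$ and $RD_{obs}\le RD_{true}$ for the primed instance, hence for the original one, which is the first claim; the second claim follows identically with $\ge$ and $\le$ interchanged throughout.

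As a cross-check and an alternative route, one can instead redo the proof of Theorem~\ref{the:nonmonotone3} line by line. The hypothesis $p(a\mid c)\le p(\na\mid\nc)\le 0.5$ forces $p(a\mid c)/p(a\mid\nc)\le 1$, and the unimodality of $f(x)=x(1-x)$ at $x=0.5$ (increasing on $[0,0.5]$) forces $f(p(a\mid c))\le f(p(\na\mid\nc))$, hence $p(a\mid c)/p(a\mid\nc)\le p(\na\mid\nc)/p(\na\mid c)$ and so $p(c\mid a)\le p(\nc\mid\na)\le 0.5$. Writing $p(c\mid a)=0.5-\alpha$ and $p(\nc\mid\na)=0.5-\beta$ with $\alpha\ge\beta\ge 0$ and expanding as in Equation~\ref{eq:rdtrue2} yields $RD_{crude}=RD_{true}-\alpha\big(E[Y\mid a,c]-E[Y\mid a,\nc]\big)+\beta\big(E[Y\mid\na,\nc]-E[Y\mid\na,c]\big)$, which is $\le RD_{true}$ because $\alpha\ge\beta\ge 0$ and $E[Y\mid a,c]-E[Y\mid a,\nc]\ge E[Y\mid\na,\nc]-E[Y\mid\na,c]\ge 0$; the $RD_{obs}$ part runs identically with the weighted conditional probabilities $p(c\mid a,d)p(d)+p(c\mid a,\nd)p(\nd)$ and $p(\nc\mid\na,d)p(d)+p(\nc\mid\na,\nd)p(\nd)$ in place of $p(c\mid a)$ and $p(\nc\mid\na)$.

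I expect the only genuine friction in that direct route to be re-establishing the analogue of the \texttt{Mathematica}-verified inequality used in the proof of Theorem~\ref{the:nonmonotone3}, namely $p(c\mid a,d)p(d)+p(c\mid a,\nd)p(\nd)\le p(\nc\mid\na,d)p(d)+p(\nc\mid\na,\nd)p(\nd)$, which is what pins down $\alpha\ge\beta$ in the $RD_{obs}$ expansion. This is exactly why the relabelling argument is preferable: the swap $c\leftrightarrow\nc$ maps the parameter region of Theorem~\ref{the:nonmonotone4} bijectively onto that of Theorem~\ref{the:nonmonotone3}, so the required inequality is inherited from the one already checked there and no new computation is needed.
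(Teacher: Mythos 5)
Your primary argument is correct, and it is a genuinely different route from the paper's. The paper proves this theorem by redoing the computation of Theorem~\ref{the:nonmonotone3} with the inequalities reversed: it shows $p(c|a)\leq p(\nc|\na)\leq 0.5$, writes $p(c|a)=0.5+\alpha$ and $p(\nc|\na)=0.5+\beta$ with $\alpha\leq\beta\leq 0$, and plugs into the same expansion $RD_{crude}=RD_{true}+\alpha(E[Y|a,c]-E[Y|a,\nc])-\beta(E[Y|\na,\nc]-E[Y|\na,c])$; the $RD_{obs}$ part likewise asserts the reversed chain $p(c|a,d)p(d)+p(c|a,\nd)p(\nd)\leq p(\nc|\na,d)p(d)+p(\nc|\na,\nd)p(\nd)\leq 0.5$, i.e.\ the mirror image of the inequality that, in the proof of Theorem~\ref{the:nonmonotone3}, was only established by an appeal to \texttt{Mathematica}'s \texttt{FindInstance}. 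Your relabelling argument is cleaner on precisely this point: you check that the swap $c\leftrightarrow\nc$ maps the hypothesis region of Theorem~\ref{the:nonmonotone4} bijectively onto that of Theorem~\ref{the:nonmonotone3}, turns the first antecedent into the second and vice versa, and leaves $RD_{crude}$, $RD_{obs}$ and $RD_{true}$ invariant (the first two depend only on the $(A,D,Y)$ margin, and the adjustment formula for $RD_{true}$ merely has its two summands permuted), so the conclusion — including the computer-verified inequality — is inherited with no new computation. Your verification of the transformed hypotheses ($p'(\na|\nc)=1-p(a|c)\geq 1-p(\na|\nc)=p'(a|c)\geq 0.5$, and likewise for $d$) is exactly right. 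Your secondary ``direct'' route is essentially the paper's own proof, and your diagnosis that its only friction is the analogue of the \texttt{Mathematica}-checked inequality is accurate; the relabelling argument is the tidier way to discharge it.
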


\begin{proof}
Similar to the proof of Theorem \ref{the:nonmonotone3}. Specifically, the assumption that $p(a|c) \leq p(\na|\nc) \leq 0.5$ implies that $p(a|c)p(\na|c) \leq p(a|\nc)p(\na|\nc)$, which together imply that $p(c|a) \leq p(\nc|\na) \leq 0.5$, which implies that
\[
RD_{crude} = RD_{true} + \alpha \big( E[Y|a,c] - E[Y|a,\nc] \big) - \beta \big( E[Y|\na,\nc] - E[Y|\na,c] \big)
\]
with $\alpha \leq \beta \leq 0$. This implies the stated relationships between $RD_{crude}$ and $RD_{true}$. The assumption that $p(a|c) \leq p(\na|\nc) \leq 0.5$ also implies that $p(a|c) / p(a|\nc) \leq 1$, which implies that $p(c|a,d) \leq p(c|d)$ and $p(c|a,\nd) \leq p(c|\nd)$, which together with the assumption that $p(d|c) \leq p(\nd|\nc) \leq 0.5$ imply that
\[
p(c|a,d) p(d) + p(c|a,\nd) p(\nd) \leq p(\nc|\na,d) p(d) + p(\nc|\na,\nd) p(\nd) \leq 0.5
\]
which implies that
\[
RD_{obs} = RD_{true} + \alpha \big( E[Y|a,c] - E[Y|a,\nc] \big) - \beta \big( E[Y|\na,\nc] - E[Y|\na,c] \big)
\]
with $\alpha \leq \beta \leq 0$. This implies the stated relationships between $RD_{obs}$ and $RD_{true}$.
\end{proof}

\begin{theorem}
Consider the causal graph to the left in Figure \ref{fig:graphs}. Let $p(c) \leq 0.5$ and $p(\na|\nc) \geq p(a|c) \geq 0.5$. If $E[Y|a,c] - E[Y|a,\nc] \geq E[Y|\na,\nc] - E[Y|\na,c] \geq 0$, then $RD_{crude} \geq RD_{true}$. If $E[Y|a,c] - E[Y|a,\nc] \leq E[Y|\na,\nc] - E[Y|\na,c] \leq 0$, then $RD_{crude} \leq RD_{true}$.
\end{theorem}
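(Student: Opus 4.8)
The plan is to mimic the part of the proof of Theorem~\ref{the:nonmonotone3} that establishes $RD_{crude}\geq RD_{true}$, but since $p(c)$ need no longer equal $0.5$, I would work directly with the differences $\alpha:=p(c|a)-p(c)$ and $\gamma:=p(c)-p(c|\na)$ rather than with $p(c|a)-0.5$ and $p(\nc|\na)-0.5$. (No claim about $RD_{obs}$ is made here, so none of the $D$-related machinery, including the \texttt{FindInstance} step, is needed.)

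First I would apply Bayes' rule and simplify. A short computation gives
\[
\alpha = \frac{p(c)p(\nc)\big(p(a|c)-p(a|\nc)\big)}{p(a)},\qquad
\gamma = \frac{p(c)p(\nc)\big(p(\na|\nc)-p(\na|c)\big)}{p(\na)} .
\]
Since $p(a|c)\geq 0.5$ and $p(\na|\nc)\geq 0.5$ force $p(a|c)\geq 0.5\geq p(a|\nc)$, both numerators are nonnegative, so $\alpha\geq 0$ and $\gamma\geq 0$; moreover they are literally equal because $p(a|c)-p(a|\nc)=p(\na|\nc)-p(\na|c)$. Hence $\alpha\geq\gamma$ is equivalent to $p(a)\leq p(\na)$, i.e.\ to $p(a)\leq 0.5$.

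The one place where $p(c)\leq 0.5$ enters, and the step I expect to need the most care (though it is still elementary), is showing $p(a)\leq 0.5$. Writing $p(a|c)=0.5+u$ with $u\geq 0$ and $p(a|\nc)=0.5-v$ with $v=p(\na|\nc)-0.5\geq u\geq 0$, one gets $p(a)-0.5=p(c)u-p(\nc)v$, which is $\leq 0$ because $p(c)\leq 0.5\leq p(\nc)$, $v\geq u$ and $u,v\geq 0$. Therefore $\alpha\geq\gamma\geq 0$.

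Finally I would substitute. Expanding $E[Y|a]=E[Y|a,c]p(c|a)+E[Y|a,\nc]p(\nc|a)$ and likewise for $E[Y|\na]$, subtracting $RD_{true}$, and using $p(\nc|a)-p(\nc)=-\alpha$ and $p(\nc|\na)-p(\nc)=\gamma$, the $p(c)$ terms cancel and
\[
RD_{crude}-RD_{true}=\alpha\big(E[Y|a,c]-E[Y|a,\nc]\big)-\gamma\big(E[Y|\na,\nc]-E[Y|\na,c]\big).
\]
Under the first hypothesis $E[Y|a,c]-E[Y|a,\nc]\geq E[Y|\na,\nc]-E[Y|\na,c]\geq 0$, so since $\alpha\geq\gamma\geq 0$ the right-hand side is at least $\gamma\big(E[Y|\na,\nc]-E[Y|\na,c]\big)-\gamma\big(E[Y|\na,\nc]-E[Y|\na,c]\big)=0$, giving $RD_{crude}\geq RD_{true}$. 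The second statement follows from the same display by running the identical chain of inequalities with all signs reversed.
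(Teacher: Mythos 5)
Your proof is correct, and the final decomposition
\[
RD_{crude}-RD_{true}=\alpha\big(E[Y|a,c]-E[Y|a,\nc]\big)-\gamma\big(E[Y|\na,\nc]-E[Y|\na,c]\big)
\]
with $\alpha\geq\gamma\geq 0$ is exactly the one the paper uses (your $\gamma$ is the paper's $\beta=p(\nc|\na)-p(\nc)$, since $p(\nc|\na)-p(\nc)=p(c)-p(c|\na)$). Where you genuinely diverge is in how the key inequality $\alpha\geq\gamma$ is established. The paper writes $p(c|a)$ and $p(\nc|\na)$ as logistic sigmoids of log odds, invokes the concavity property $\sigma(-u+v)-\sigma(-u)\geq\sigma(u+v)-\sigma(u)$ for $u,v\geq 0$ (with $u=\ln\frac{p(\nc)}{p(c)}$, using $p(c)\leq 0.5$), and combines it with the auxiliary inequality $\frac{p(a|c)}{p(a|\nc)}\geq\frac{p(\na|\nc)}{p(\na|c)}$, itself derived from the shape of $x(1-x)$. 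You instead compute $\alpha$ and $\gamma$ in closed form via Bayes' rule, notice that their numerators coincide because $p(a|c)-p(a|\nc)=p(\na|\nc)-p(\na|c)$, and reduce the whole comparison to the elementary fact $p(a)\leq 0.5$, which you verify directly from $p(c)\leq 0.5$ and $p(\na|\nc)\geq p(a|c)\geq 0.5$. Your route is more elementary and arguably more transparent: it isolates the single quantity ($p(a)$ versus $p(\na)$) that actually governs the ordering of $\alpha$ and $\gamma$, whereas the paper's sigmoid machinery is reused from (and is more essential to) the proofs of Theorems \ref{the:nonmonotone} and \ref{the:nonmonotone3}, where the analogous comparison involves mixtures over $D$ and does not collapse to a single marginal probability. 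The only point to be careful about is the phrase ``$\alpha\geq\gamma$ is equivalent to $p(a)\leq p(\na)$'': this presumes the common numerator $p(c)p(\nc)\big(p(a|c)-p(a|\nc)\big)$ is strictly positive, but when it vanishes $\alpha=\gamma=0$ and the conclusion holds trivially, so only the implication you actually use survives in all cases; this is a presentational quibble, not a gap.
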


\begin{proof}
Recall from the proof of Theorem \ref{the:nonmonotone} that
\[
p(c|a) = \sigma \Big( \ln \frac{p(a|c)p(c)}{p(a|\nc)p(\nc)} \Big)
\]
and
\[
p(\nc|\na) = \sigma \Big( \ln \frac{p(\na|\nc)p(\nc)}{p(\na|c)p(c)} \Big)
\]
and
\[
p(c) = \sigma \Big( \ln \frac{p(c)}{p(\nc)} \Big)
\]
and
\[
p(\nc) = \sigma \Big( \ln \frac{p(\nc)}{p(c)} \Big).
\]
Therefore, $p(c|a) \geq p(c)$ and $p(\nc|\na) \geq p(\nc)$ because $\sigma()$ and $\ln()$ are increasing functions and $p(a|c) / p(a|\nc) \geq 1$ and $p(\na|\nc) / p(\na|c) \geq 1$ by the assumption that $p(\na|\nc) \geq p(a|c) \geq 0.5$. Then, we can write $p(c|a)=p(c)+\alpha$ and $p(\nc|\na)=p(\nc)+\beta$ with $\alpha, \beta \geq 0$. Moreover, $\alpha \geq \beta$. To see it, recall from the proof of Theorem \ref{the:nonmonotone3} that a function of the form $f(x)=x(1-x)$ has a single maximum at $x=0.5$, and it is increasing in the interval $[0,0.5]$ and decreasing in the interval $[0.5,1]$. Now, note that $\sigma'(z) = \sigma(z) (1-\sigma(z))$ \citep[Equation 4.88]{Bishop2006}. Then, $\sigma'(z)$ has a single maximum at $\sigma(z)=0.5$ (i.e. at $z=0$), and it is increasing in the interval $\{\sigma(z) \:|\: 0 \leq \sigma(z) \leq 0.5\}$ (i.e., $\{z \:|\: -\infty < z \leq 0\}$) and decreasing in the interval $\{\sigma(z) \:|\: 0.5 \leq \sigma(z) \leq 1\}$ (i.e., $\{z \:|\: 0 \leq z < +\infty\}$). In other words, $\sigma(z)$ increases at an increasing rate in the interval $(-\infty,0]$ and increases at a decreasing rate in the interval $[0,+\infty)$. Therefore, $\sigma(-u+v) - \sigma(-u) \geq \sigma(u+v) - \sigma(u)$ for all $u,v \geq 0$. Then,
\begin{align*}
\alpha &= p(c|a) - p(c)\\
&= \sigma \Big( \ln \frac{p(a|c)}{p(a|\nc)} + \ln \frac{p(c)}{p(\nc)} \Big) - \sigma \Big( \ln \frac{p(c)}{p(\nc)} \Big)\\
&\geq \sigma \Big( \ln \frac{p(a|c)}{p(a|\nc)} + \ln \frac{p(\nc)}{p(c)} \Big) - \sigma \Big( \ln \frac{p(\nc)}{p(c)} \Big)\\
&\geq \sigma \Big( \ln \frac{p(\na|\nc)}{p(\na|c)} + \ln \frac{p(\nc)}{p(c)} \Big) - \sigma \Big( \ln \frac{p(\nc)}{p(c)} \Big)\\
&= p(\nc|\na) - p(\nc) = \beta
\end{align*}
where the first inequality follows from the fact that $\sigma(-u+v) - \sigma(-u) \geq \sigma(u+v) - \sigma(u)$ with $u=\ln \frac{p(\nc)}{p(c)}$ and $v=\ln \frac{p(a|c)}{p(a|\nc)}$, and the second inequality follows from Equation \ref{eq:kuku} and the fact that $\sigma()$ and $\ln()$ are increasing functions. Note that $u,v \geq 0$ by the assumptions that $p(c) \leq 0.5$ and $p(\na|\nc) \geq p(a|c) \geq 0.5$.

Finally, note that
\begin{align}\nonumber
RD_{crude} &= E[Y|a] - E[Y|\na]\\\nonumber
&= E[Y|a,c] p(c|a) + E[Y|a,\nc] p(\nc|a)\\\nonumber
&- E[Y|\na,c] p(c|\na) - E[Y|\na,\nc] p(\nc|\na)\\\nonumber
&= E[Y|a,c] (p(c) + \alpha) + E[Y|a,\nc] (p(c) - \alpha)\\\nonumber
&- E[Y|\na,c] (p(\nc) - \beta) - E[Y|\na,\nc] (p(\nc) + \beta)\\\label{eq:rdtruekuku}
&= RD_{true} + \alpha \big( E[Y|a,c] - E[Y|a,\nc] \big) - \beta \big( E[Y|\na,\nc] - E[Y|\na,c] \big)
\end{align}
which implies the desired results because $\alpha \geq \beta \geq 0$.
\end{proof}

\begin{theorem}
Consider the causal graph to the left in Figure \ref{fig:graphs}. Let $p(c) \geq 0.5$ and $p(a|c) \leq p(\na|\nc) \leq 0.5$. If $E[Y|a,c] - E[Y|a,\nc] \geq E[Y|\na,\nc] - E[Y|\na,c] \geq 0$, then $RD_{crude} \leq RD_{true}$. If $E[Y|a,c] - E[Y|a,\nc] \leq E[Y|\na,\nc] - E[Y|\na,c] \leq 0$, then $RD_{crude} \geq RD_{true}$.
\end{theorem}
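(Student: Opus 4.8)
The plan is to reduce the statement to Theorem~\ref{the:nonmonotone5} by relabelling the two values of the confounder $C$, and, as a sanity check, to note that a direct adaptation of that theorem's proof also works. Since the statement concerns only $RD_{crude}$ and $RD_{true}$, the proxy $D$ plays no role and the whole argument takes place in the marginal model over $A$, $C$, $Y$.

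For the relabelling, I would introduce the binary variable $C'$ obtained from $C$ by swapping its two values, i.e.\ $c':=\nc$ and $\nc':=c$. The causal graph to the left in Figure~\ref{fig:graphs}, the factorization in Equation~\ref{eq:factorization}, and the conditional independences it entails are all invariant under this relabelling, and so are $RD_{true}$ (a sum over the values of $C$) and $RD_{crude}$ (which does not mention $C$). Under the hypotheses here, $p(c)\ge 0.5$ gives $p(c')=p(\nc)\le 0.5$, and $p(a|c)\le p(\na|\nc)\le 0.5$ rewrites, via $p(a|c)=p(a|\nc')$, $p(\na|\nc)=p(\na|c')$, $p(a|\nc')=1-p(\na|\nc')$ and $p(\na|c')=1-p(a|c')$, as $p(\na|\nc')\ge p(a|c')\ge 0.5$; these are exactly the probability hypotheses of Theorem~\ref{the:nonmonotone5} for $C'$. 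Likewise, substituting $c=\nc'$ and $\nc=c'$ into $E[Y|a,c]-E[Y|a,\nc]\ge E[Y|\na,\nc]-E[Y|\na,c]\ge 0$ and multiplying by $-1$ turns it into $E[Y|a,c']-E[Y|a,\nc']\le E[Y|\na,\nc']-E[Y|\na,c']\le 0$, which is the \emph{second} antecedent of Theorem~\ref{the:nonmonotone5}; its conclusion $RD_{crude}\le RD_{true}$ is precisely the first claim here. The second claim follows the same way from the first antecedent/conclusion pair of Theorem~\ref{the:nonmonotone5}.

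Alternatively, and to double-check, one can mirror the proof of Theorem~\ref{the:nonmonotone5} directly: the sigmoid expressions for $p(c|a)$, $p(\nc|\na)$, $p(c)$, $p(\nc)$ recalled there, together with $p(a|c)/p(a|\nc)\le 1$ and $p(\na|\nc)/p(\na|c)\le 1$ (which follow from $p(a|c)\le p(\na|\nc)\le 0.5$), give $p(c|a)\le p(c)$ and $p(\nc|\na)\le p(\nc)$, so one can write $p(c|a)=p(c)+\alpha$ and $p(\nc|\na)=p(\nc)+\beta$ with $\alpha,\beta\le 0$; then $RD_{crude}=RD_{true}+\alpha\big(E[Y|a,c]-E[Y|a,\nc]\big)-\beta\big(E[Y|\na,\nc]-E[Y|\na,c]\big)$ exactly as in Equation~\ref{eq:rdtruekuku}. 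The main obstacle, just as in Theorem~\ref{the:nonmonotone5}, is establishing $\alpha\le\beta$ with the reversed signs; I would get it from (i) $p(a|c)/p(a|\nc)\le p(\na|\nc)/p(\na|c)$, obtained by applying $f(x)=x(1-x)$ (increasing on $[0,0.5]$) to $p(a|c)\le p(\na|\nc)\le 0.5$, and (ii) the same concavity fact about $\sigma$ used there, namely $2\sigma(u)\ge\sigma(u+v)+\sigma(u-v)$ for $u,v\ge 0$, now invoked with $u=\ln\frac{p(c)}{p(\nc)}\ge 0$ (this is where $p(c)\ge 0.5$ enters) and $v=\ln\frac{p(\na|\nc)}{p(\na|c)}$. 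With $\alpha\le\beta\le 0$ and the antecedent in hand, the stated sign of $RD_{crude}-RD_{true}$ follows, and the second claim is the mirror image. The relabelling route is cleaner precisely because it imports this $\alpha\le\beta$ comparison wholesale from Theorem~\ref{the:nonmonotone5}.
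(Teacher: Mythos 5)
Your proposal is correct, and your primary route is genuinely different from the paper's. The paper proves this theorem by directly mirroring the proof of Theorem \ref{the:nonmonotone5}: it re-derives $p(c|a)\le p(c)$ and $p(\nc|\na)\le p(\nc)$, sets $p(c|a)=p(c)+\alpha$, $p(\nc|\na)=p(\nc)+\beta$ with $\alpha\le\beta\le 0$ via the reversed version of Equation \ref{eq:kuku} and a sign-flipped sigmoid inequality, and then plugs into Equation \ref{eq:rdtruekuku} --- which is exactly your ``sanity check'' second route. Your main argument instead reduces the statement to Theorem \ref{the:nonmonotone5} by relabelling the values of $C$, checking that the graph, the factorization, $RD_{true}$ and $RD_{crude}$ are label-invariant and that the hypotheses map onto those of Theorem \ref{the:nonmonotone5} with the two antecedents swapped; this is valid and arguably cleaner, since it imports the delicate $\alpha$-versus-$\beta$ comparison wholesale rather than re-running it with reversed signs. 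Indeed, the relabelling sidesteps a spot where the paper itself stumbles: the paper states the sigmoid inequality for this theorem as $\sigma(-u+v)-\sigma(-u)\le\sigma(u+v)-\sigma(u)$ for $u\ge 0$, $v\le 0$, which is backwards (one needs $\ge$ there, equivalent to your $2\sigma(u)\ge\sigma(u+v)+\sigma(u-v)$, which holds for $u\ge 0$ and all $v$ by the symmetry of the right-hand side under $v\mapsto -v$). Your formulation of that fact is the correct one, so both of your routes go through.
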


\begin{proof}
Similar to the proof of Theorem \ref{the:nonmonotone5}. Specifically, the assumption that $p(a|c) \leq p(\na|\nc) \leq 0.5$ implies that $p(c|a) \leq p(c)$ and $p(\nc|\na) \leq p(\nc)$, which implies that $p(c|a)=p(c)+\alpha$ and $p(\nc|\na)=p(\nc)+\beta$ with $\alpha \leq \beta \leq 0$, because $\sigma(-u+v) - \sigma(-u) \leq \sigma(u+v) - \sigma(u)$ for all $u \geq 0$ and $v \leq 0$ and, moreover, the reverse inequality of Equation \ref{eq:kuku} holds now. Finally, Equation \ref{eq:rdtruekuku} implies the desired results.
\end{proof}

\begin{theorem}
Consider the path diagram to the right in Figure \ref{fig:graphs}. Assume that the variables are standardized. If $sign(\beta)=sign(\gamma)$ then $\beta_{YA \cdot C} \leq \beta_{YA \cdot D} \leq \beta_{YA}$, else $\beta_{YA \cdot C} \geq \beta_{YA \cdot D} \geq \beta_{YA}$.
\end{theorem}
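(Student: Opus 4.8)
The plan is to turn everything into closed-form expressions in the four path coefficients and then observe that $\beta_{YA\cdot D}$ is a \emph{convex combination} of $\beta_{YA\cdot C}$ and $\beta_{YA}$, with a weight that lies in $[0,1]$ regardless of the parameters; the sign of that combination's "direction" $\beta\gamma$ then gives exactly the case split in the statement.

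First I would write out the structural equations of the standardized model, $A=\beta C+\varepsilon_A$, $Y=\alpha A+\gamma C+\varepsilon_Y$, $D=\delta C+\varepsilon_D$, with $C$ and the error terms mutually uncorrelated and $\mathrm{Var}(C)=1$, and use the standardization constraints $\mathrm{Var}(A)=\mathrm{Var}(Y)=\mathrm{Var}(D)=1$ to pin down the residual variances (e.g.\ $\mathrm{Var}(\varepsilon_A)=1-\beta^2$), which in particular forces $\beta^2\le 1$ and $\delta^2\le 1$. Then I would compute the six pairwise correlations: $\rho_{AC}=\beta$, $\rho_{CY}=\alpha\beta+\gamma$, $\rho_{AY}=\alpha+\beta\gamma$, $\rho_{DC}=\delta$, $\rho_{AD}=\beta\delta$, and $\rho_{DY}=\delta(\alpha\beta+\gamma)$.

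Since the variables are standardized, each partial regression coefficient is given by the usual one-covariate-adjustment formula $\beta_{YA\cdot Z}=(\rho_{AY}-\rho_{AZ}\rho_{ZY})/(1-\rho_{AZ}^2)$. Taking $Z=C$ the numerator collapses to $\alpha(1-\beta^2)$ over $1-\beta^2$, so $\beta_{YA\cdot C}=\alpha$; taking $Z=D$ the numerator factors as $\alpha(1-\beta^2\delta^2)+\beta\gamma(1-\delta^2)$ over $1-\beta^2\delta^2$, so that
\[
\beta_{YA\cdot D}=(1-\lambda)\,\beta_{YA\cdot C}+\lambda\,\beta_{YA},\qquad \lambda:=\frac{1-\delta^2}{1-\beta^2\delta^2},
\]
where I also use $\beta_{YA}=\rho_{AY}=\alpha+\beta\gamma$. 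The final step is the elementary bound $0\le\lambda\le 1$: non-negativity is immediate from $\delta^2\le 1$ (and $\beta^2\delta^2<1$ for the denominator), while $\lambda\le 1$ is equivalent to $\beta^2\delta^2\le\delta^2$, i.e.\ $\beta^2\le 1$. Hence $\beta_{YA\cdot D}$ lies between $\beta_{YA\cdot C}$ and $\beta_{YA}$, and the direction of the chain of inequalities is governed solely by $\mathrm{sign}(\beta\gamma)=\mathrm{sign}(\beta)\,\mathrm{sign}(\gamma)$, which is precisely the dichotomy asserted.

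I do not expect a genuine obstacle here; the proof is essentially bookkeeping with correlations. The only points requiring a little care are the degenerate configurations ($\beta=0$, $\delta=0$, or $\beta^2\delta^2=1$), where some partial correlation is undefined or $\lambda$ sits at an endpoint of $[0,1]$; these I would dispatch by noting that the regressions are only meaningful when the relevant residual variances are strictly positive, and that in the remaining boundary cases the (weak) inequalities hold trivially.
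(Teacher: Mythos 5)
Your proof is correct and follows essentially the same route as the paper: both reduce to the closed-form expressions $\beta_{YA\cdot C}=\alpha$, $\beta_{YA}=\alpha+\beta\gamma$, $\beta_{YA\cdot D}=\alpha+\beta\gamma(1-\delta^2)/(1-\beta^2\delta^2)$ and then observe that standardization forces $\beta^2\le 1$ and $\delta^2\le 1$, hence $0\le(1-\delta^2)/(1-\beta^2\delta^2)\le 1$. The only difference is that you derive the partial regression coefficients from the correlations yourself (and phrase the conclusion explicitly as a convex combination), whereas the paper cites Pearl for those formulas; the substance of the argument is identical.
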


\begin{proof}
\citet[Section 3.11]{Pearl2013} shows that $\beta_{YA \cdot C}=\alpha$, $\beta_{YA}=\alpha+\beta \gamma$ and
\begin{equation}\label{eq:betaYAD}
\beta_{YA \cdot D} = \alpha + \frac{\gamma \beta (1-\delta^2)}{1-\beta^2 \delta^2}.
\end{equation}
Note that the linear structural equation model corresponding to the path diagram under consideration implies that
\[
A = \beta C + \epsilon_A
\]
where $\epsilon_A$ is an error term that is independent of $C$ and, thus,
\[
var(A)=\beta^2 var(C) + var(\epsilon_A)
\]
where $var(A)=var(C)=1$ due to the assumption that the variables are standardized. This implies that $\beta^2 \leq 1$. Similarly $\delta^2 \leq 1$. Then, $1-\delta^2 \leq 1-\beta^2 \delta^2$ in Equation \ref{eq:betaYAD}. The result is now immediate.
\end{proof}

\bibliographystyle{plainnat}
\bibliography{monotonicityAddendum}

\end{document}